\newtheorem{lemma}{Lemma}
\newtheorem{theorem}{Theorem}
\newtheorem{definition}{Definition}
\newcounter{relctr} 
\everydisplay\expandafter{\the\everydisplay\setcounter{relctr}{0}}
\newcommand\labrel[2]{%
  \begingroup
    \refstepcounter{relctr}%
    \stackrel{\textnormal{(\alph{relctr})}}{\mathstrut{#1}}%
    \originallabel{#2}%
  \endgroup
}
\def\BibTeX{{\rm B\kern-.05em{\sc i\kern-.025em b}\kern-.08em
    T\kern-.1667em\lower.7ex\hbox{E}\kern-.125emX}}
\begin{document}

\newcommand{\mg}[1]{{\color{blue} MG: #1}}
\newcommand{\yc}{\color{red}}
\newcommand{\ViS}{\mathsf{ViS}}

\title{VR Viewport Pose Model for Quantifying and Exploiting Frame Correlations}

\author{Ying Chen\IEEEauthorrefmark{1}, Hojung Kwon\IEEEauthorrefmark{1}, 
Hazer Inaltekin\IEEEauthorrefmark{2}, Maria Gorlatova\IEEEauthorrefmark{1} \\
\IEEEauthorrefmark{1}Duke University, Durham, NC, \IEEEauthorrefmark{2}Macquarie University, North Ryde, NSW, Australia\\
\IEEEauthorrefmark{1}\{ying.chen151, hojung.kwon, maria.gorlatova\}@duke.edu, \IEEEauthorrefmark{2}hazer.inaltekin@mq.edu.au}

\maketitle

\begin{abstract} 
The importance of the dynamics of the \emph{viewport pose}, i.e., the location and the orientation of users' points of view, for virtual reality (VR) experiences calls for the development of VR viewport pose models. In this paper, informed by our experimental measurements of viewport trajectories 
across 3 different types of VR interfaces, we first develop a statistical model of viewport poses in VR environments.  
Based on the developed model, we examine the correlations between pixels in VR frames that correspond to different viewport poses, and obtain an analytical expression for the visibility similarity~(ViS) of the pixels across different VR frames. 
We then propose a lightweight ViS-based ALG-ViS algorithm that adaptively splits VR frames into the background and the foreground, reusing the background across different frames. 
Our implementation of ALG-ViS in two Oculus Quest 2 rendering systems demonstrates ALG-ViS running in real time, supporting the full VR frame rate, and outperforming baselines on measures of frame quality and bandwidth consumption.

\end{abstract}

\begin{IEEEkeywords}
Virtual reality, pose model, frame correlation, game engine-based simulations
\end{IEEEkeywords}
\section{Introduction}
\label{sub:introduction} 
    Virtual reality (VR), which immerses users into computer-generated virtual environments~\cite{lavalleIntro}, 
        has been showing promise in many applications including gaming, education, and healthcare \cite{gpp}. 
        VR is expected to boost global GDP 
        by \$450~billion by 2030~\cite{PWC2019Seeing}. 
        High expectation for VR, coupled with its known resource-hungry nature~\cite{Cuervo18},   
        spurred a wide range of recent research that optimizes VR systems to reduce their communication and computing resource  consumption~\cite{VRLTE18,TiledStreaming,Liu2018Cutting,YangLi2019,YLi18,MengJiayi20}.

A particular feature of VR is the tight coupling of user's actions and the generated frames. In traditional visual media, the frames that are shown to the users are 
fixed. By contrast, in VR, to allow the users to independently explore virtual worlds, each frame is generated for the specific point of view of the user at a given time, i.e., for the specific \emph{viewport pose}, 
namely the $x$, $y$, and $z$ coordinates, and polar and azimuth orientation angles $\theta$ and $\phi$, of user's VR headset or another interface to the virtual world 
(mobile phone \cite{Firefly,YangLi2019,VRLTE18}; computer monitor~\cite{2019desktop}). Hence, the correlations between different VR frames, and the performance of 
approaches that exploit them 
to reduce resource consumption in VR~\cite{YangLi2019,YLi18,MengJiayi20}, are \emph{intimately tied to the dynamics of user behavior within the VR experience}. We examine and exploit this phenomenon in this work.

    First, we develop a statistical model of users' VR viewport pose, comprised of the models of pose components, orientation and position. 
        To develop this model, we collected a dataset of VR viewport trajectories in 3 VR games 
        and across 3 different types of VR user interfaces, 
            with over 5.5 hours of user data in total. 
       
    To characterize the correlation of viewport orientations between VR frames that are $\Delta t$ seconds apart, we obtain models of the change of azimuth and polar angles over $\Delta t$. 
    To characterize the displacement between VR viewport positions that are $\Delta t$ apart, 
        we propose a \emph{modified random waypoint model (RWP) with random pause times} (`paused-MRWP'). 
    We demonstrate a close fit of the developed 
    VR viewport pose 
    model to the experimental data. 
            \emph{To the best of our knowledge, this is the first statistical model of viewport pose in VR.}
 
  \begin{wrapfigure}{R}{0.2\textwidth}
 \vspace{-0cm}
  \centering
  \includegraphics[width=0.8\linewidth]{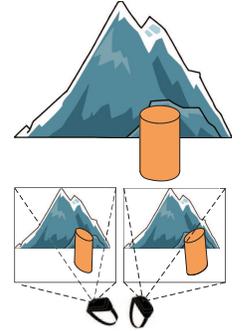}
   \vspace{-0.2cm}
  \caption{VR frames generated for similar VR device viewport poses.
  \label{fig:PixelReusePotential}}
  \vspace{-0.5cm}
\end{wrapfigure}

    Next, we apply the developed pose model to quantify the similarity of pixels across VR frames. 
        For similar poses, the VR frames 
        are highly redundant, as shown in Fig.~\ref{fig:PixelReusePotential}. 
        It is thus possible to reduce 
        resource consumption 
            by rendering a set of `reference' frames, and generating other, 
            `novel', frames 
            by rendering only a portion of the frame 
            while generating the rest by reusing  the reference frame via view projection~\cite{YangLi2019, Warp}.
    In this paper we derive analytical expressions for the visibility similarity (ViS) of pixels across different VR frames, relating  
            the poses of the reference and novel frames through the developed viewport pose model, and accounting for the 
    misalignments of the fields of view (FoVs) 
    and 
    the VR contents-to-viewport distance differences between the novel and the reference frames. 
    We verify our analysis via Unity~3D~\cite{Unity20} game engine-based simulations. 
    Finally, we exploit
    the formulated ViS to adaptively divide VR frame contents into \emph{background and foreground}, in order to render the foreground for the novel frames, while reusing the background. 
        Separate treatment of background and foreground in VR frame generation has been considered in multiple lines of work~\cite{Firefly, Furion17,MengJiayi20,YLi18,VRPredictiveSchedule}, which use heuristics for this separation. 
        In this work, we propose a lightweight algorithm, ALG-ViS, 
        that uses the analytical ViS to 
        adaptively determine the distance threshold beyond which the contents
        are treated as background.
We incorporate the developed ALG-ViS in two  rendering systems based on Oculus Quest~2 (also known as Meta Quest~2), one on-device and one supported by edge computing. In both systems, \mbox{ALG-ViS} runs in real time, supporting the full VR frame rate, and outperforming a set of baselines on measures of frame quality and resource consumption.

To summarize, the main contributions of this paper are: (i) the first statistical model of viewport pose in VR, (ii) the analysis of the visibility similarity between different VR frames, and (iii) the analytically grounded algorithm for determining which contents to reuse across different frames.  We make  the VR viewport pose dataset and our  implementation codes publicly available via GitHub.\footnote{https://github.com/VRViewportPose/VRViewportPose. \label{footnote:gitHubLink}}

The rest of this paper is organized as follows. 
We review the related work in~\S\ref{sec: related Work}, 
propose the 
viewport pose model in~\S\ref{sec:pose}, and 
analyze the ViS and propose the ALG-ViS in~\S\ref{sec:visibilitysimilarity}. We present the evaluation in~\S\ref{sub:evaluation} and conclude the paper in~\S\ref{sub:conclusion}. 

\section{Related Work}
\label{sec: related Work}

\par \noindent
\textbf{Device pose modeling}: 
VR frame generation requires information about the  pose (position and orientation) of user's point of view. The vast body of work that has, over the years, 
modeled 
human mobility in many different applications~\cite{C.Bettstetter03, S.Ioannidis06,I.Rhee11} 
focused on human positions but not orientations. Orientations of handheld mobile devices are starting to be modeled in context of visible light communications~\cite{Y.S.Eroglu19,M.D.Soltani19}. \emph{We are unaware of existing statistical models of users' viewport pose in VR}. 
The position component of our developed
model builds on the modified RWP proposed in~\cite{XLin13}, and one of the orientation components is related to the observations previously made in~\cite{VSitzmann2018}. The comprehensive 
model we propose significantly modifies and extends these approaches. 

\par \noindent
\textbf{Predicting VR viewport pose}: {Recently, several approaches that predict pose or its components in VR systems have been developed~\cite{Flare,kNN,Firefly,6DoFVRPrediction,PredictiveStreaming,LiveObj}.}  
Unfortunately highly immersive VR experiences are known to be negatively affected by errors in the pose prediction~\cite{YangLi2019}. Our statistical approach can be seen as making decisions based on the \emph{distribution of viewport poses}, rather than the specific predicted pose. Our evaluation demonstrates that this approach improves image quality and bandwidth variability over prediction-based approaches.

\par \noindent 
\textbf{Exploiting redundancy across VR frames}: Multiple methods 
for reducing the required  bandwidth and transmission latency in VR have 
been developed~\cite{VRLTE18,TiledStreaming,Liu2018Cutting,YangLi2019,YLi18,MengJiayi20}. 
In a rich body of work~\cite{Firefly, Furion17,MengJiayi20,YLi18,VRPredictiveSchedule}, a VR frame is classified into \emph{background} 
that is relatively static across VR frames 
and \emph{foreground} 
that is less similar from one frame to the next. The background can be rendered on the edge and prefetched by the VR device, while the foreground is rendered on the mobile device \cite{Firefly,Furion17,VRPredictiveSchedule}; the background can also be reused across multiple frames~\cite{MengJiayi20,YLi18}. 
These studies 
use heuristics to separate the background and the foreground. 
Complementing this work, we derive an analytical expression for the inter-frame pixel similarity, which we use to  
 split the background and the foreground \emph{adaptively}, via a lightweight algorithm that can run on-device or on the edge server.
 Our evaluation demonstrates that our 
 approach improves the VR image quality while consuming fewer 
 resources.

\section{VR Viewport Pose Model}
\label{sec:pose}

We introduce our collected dataset in \S\ref{dataset}, describe our orientation model in \S\ref{sub:orientation} and our position model in \S\ref{modifiedrwp}, and model the correlation between them in \S\ref{sec:correlation}. 

\subsection{Collected Dataset}
\label{dataset}

To complement existing 
datasets of users' head orientation in 360$^\circ$ videos~\cite{VSitzmann2018,dataset2,360dataset} 
and a small-scale single-interface dataset of users' head pose in untethered VR~\cite{VRdataset}, 
we collected a dataset of users' viewport pose in 3 different VR games listed in Table~\ref{tab:vrgame}, across 3 different common VR user interface types. 
Specifically, we examined: (i) VR experienced through a VR headset and controlled through user head rotation and a VR controller (``headset VR"), (ii) ``desktop VR"~\cite{2019desktop}, experienced through the user's desktop monitor and controlled through desktop's mouse and keyboard, 
and (iii) VR experienced through a mobile phone, controlled via moving the phone and tapping on it~\cite{VIVO}. 
{ Our institutional review board (IRB)-approved data collection,} conducted under \mbox{COVID-19} restrictions, 
involved 
remote desktop~VR and phone-based VR data collection via apps that we distributed to remote users, and a small number of socially distanced in-lab experiments for headset and phone-based VR. In total, we recorded experiences of 5 users with headset and phone-based VR, and 20 users with desktop VR.  
{For desktop and phone-based VR, each user explored the 3 VR games for 2--5 minutes (per game). }
For headset VR, the users explored each game for 2 minutes to avoid simulator sickness. 
{ Additional data collection protocol details and the dataset are provided via GitHub.\footref{footnote:gitHubLink} }

 \begin{table}
\caption{Main characteristics of VR games. 
} 
\vspace{-0.2cm}
\centering
\label{tab:vrgame}
\begin{tabular}{ccc c c c}
\hline
VR game&Number of triangles&Number of vertices
\vspace{-0.35cm}
\\ \\
\hline
Viking village (VK) \cite{VikingVillage15} &2,400 K& 1,600 K \\
Lite \cite{Lite17} & 65.7 K &52.4 K \\
Office \cite{Office20} & 207.6 K &143.7 K\\
\hline
\end{tabular}
\vspace{-0.5cm}
\end{table}

\subsection{Orientation Model}
\label{sub:orientation}
\begin{wrapfigure}{R}{0.175\textwidth}
  \centering
     \vspace{-1.3cm}
  \includegraphics[width=0.165\textwidth]{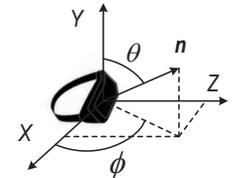}
   \vspace{-0.3cm}
    \caption{VR viewport  orientation representation.
    }
  \label{fig:deviceorientation}
  \vspace{-0.2cm}
\end{wrapfigure}

A VR viewport 
is depicted in Fig.~\ref{fig:deviceorientation}, where $\bf n$ is the unit vector along the optical axis of the camera. 
We first introduce the representation for the viewport orientation using variables related to ${\bf n}$, followed by the definition of the statistical viewport orientation model.

\begin{definition}[Viewport orientation representation]
The viewport orientation is the tuple $(\theta, \phi)$, where polar angle $\theta \in \left[0, \pi\right]$ is the angle between $\boldsymbol{n}$ and the positive direction of the $Y$-axis, and the azimuth angle $\phi \in \left[-\pi, \pi\right)$ is the angle between the projection of $\boldsymbol{n}$ in the $XZ$-plane and the positive direction of the $X$-axis in the Earth coordinates $XYZ$. 
$\theta$ and $\phi$ characterize how users look vertically and horizontally.
\end{definition}

\begin{definition} [Viewport orientation model] 
The viewport orientation model is the tuple $\left(p_\theta(\theta), p_{\Delta \theta}(\Delta \theta), p_{\Delta \phi}(\Delta \phi) \right)$, where $p_\theta(\theta)$, $p_{\Delta \theta}(\Delta \theta)$, and ${p_{\Delta\phi} }(\Delta\phi )$ are the probability density functions (PDFs) of the polar angle $\theta$,  the polar angle change $\Delta \theta \in \left[-\pi,\pi\right]$ over the time interval $\Delta t$, and  the azimuth angle change $\Delta \phi \in \left[-\pi,\pi\right]$ over $\Delta t$. In this paper we assume that ${p_\theta }(\theta )$, ${p_{\Delta\theta} }(\Delta\theta)$, and ${p_{\Delta\phi} }(\Delta\phi )$ are independent. 
\end{definition}

 $\Delta \theta$ is given by $\Delta \theta=\theta_ {\rm{nov}}-\theta_{\rm{ref}}$, where $\theta_ {\rm{ref}}$ and $\theta_{\rm{nov}}$ are the polar angles of the reference and novel frames taken
 $\Delta t$ seconds apart. $\Delta \phi$ is calculated as $\Delta \phi  =  - \pi  + \bmod \left( {{\phi _{\rm{nov}}} - {\phi _{\rm{ref}}} + \pi ,2\pi } \right)$, where $\bmod \left( {a,b} \right) = a - b\left\lfloor {\frac{a}{b}} \right\rfloor$ and $\left\lfloor  \cdot  \right\rfloor$ is the floor function, and $\phi_ {\rm{ref}}$ and $\phi_{\rm{nov}}$ are the azimuth angles of the reference and novel frames taken $\Delta t$ seconds apart. 

We examine 
azimuth angle \emph{change} $\Delta \phi$ rather than $\phi$ itself because 
$\phi$ can be assumed to be uniformly distributed when VR contents are  
scattered along different longitudes in the VR systems and there are no viewing preferences. Hence, the distribution of $\phi$ does not provide information about the correlation between different VR frames.

\begin{table*}[htbp]
\centering
  \caption{Mean and scale of the Laplace distribution that has the smallest SSE with the experimental measurements of polar angle $\theta$, for different interface types and VR games.}
  \label{our_polar_model}
  \begin{tabular}{c|ccc |ccc |ccc}
    \hline
Interface type &\multicolumn{3}{c|}{Desktop VR} &\multicolumn{3}{c|}{Headset VR}&\multicolumn{3}{c}{Phone-based VR}\\
\hline
 VR game&VK&Lite& Office& VK & Lite & Office& VK & Lite & Office\\
\hline
Mean & 90.575 & 90.037 & 89.979 & 92.331&89.654 &88.999& 90.896&90.548 &89.944\\
Scale & 7.356 & 6.057 &6.204& 4.319 &3.454 &3.646& 6.797&6.195 &6.856\\
\hline
  \end{tabular}
  \vspace{-0.2cm}
\end{table*}

\begin{table*}[t]
\centering
\caption{Scale of best-fitting Laplace distributions for polar angle change $\Delta \theta$, in Lite for different interface types.}
\label{Table Polar Angle Change}
\vspace{-0.1cm}
\begin{tabular}{c| c|c c c c c c c c cc cc}
\hline
\multicolumn{2}{c|}{$\Delta t$ (1/60 s)}  & 1& 5&10&15&20& 25 &30& 100 &200&600 &5000 \\
\hline
\multirow{3}{3em}{Interface type} &Desktop VR & 0.0748& 0.305 & 0.571& 0.826 & 1.049& 1.276 & 1.480& 3.407 & 4.984 & 8.005 &9.926 \\
& Headset VR & 0.0841& 0.450 & 0.825 & 1.149 &1.438 & 1.699 & 1.920 & 3.516 & 4.510 & 5.816 & 6.186
\\
& Phone-based VR & 0.0740& 0.332 &0.664 &0.727 & 0.880 & 1.013 & 1.141 &2.589 & 3.495 &4.792 & 12.648\\
\hline
\end{tabular}
\vspace{-0.1cm}
\end{table*}

\subsubsection{Distribution fit}

\begin{figure*}
\begin{minipage}{0.24\linewidth}
  \centering
   \vspace{-0.2cm}
  \includegraphics[width=\textwidth]{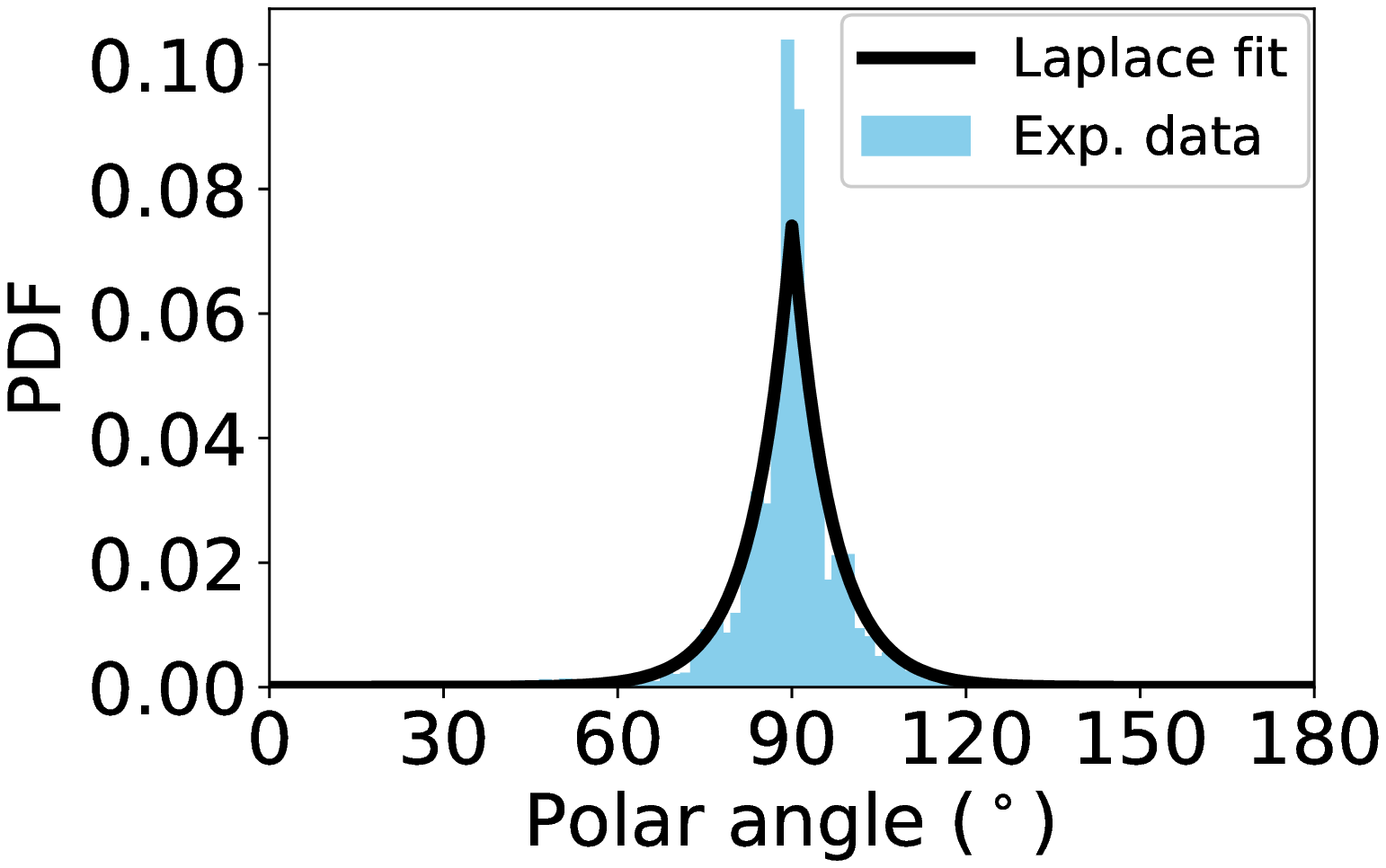}
    \caption{Experimentally obtained 
    $\theta$  (desktop VR, Lite) is Laplace distributed.}
  \label{polar_fit_our_data}
  \end{minipage}
  \hspace{0.1cm}
  \begin{minipage}{0.74\linewidth}
\centering
 \vspace{-0.2cm}
\hspace{-0.2cm}
\begin{subfigure}{.32\textwidth}
  \centering
  \includegraphics[width=1\linewidth]{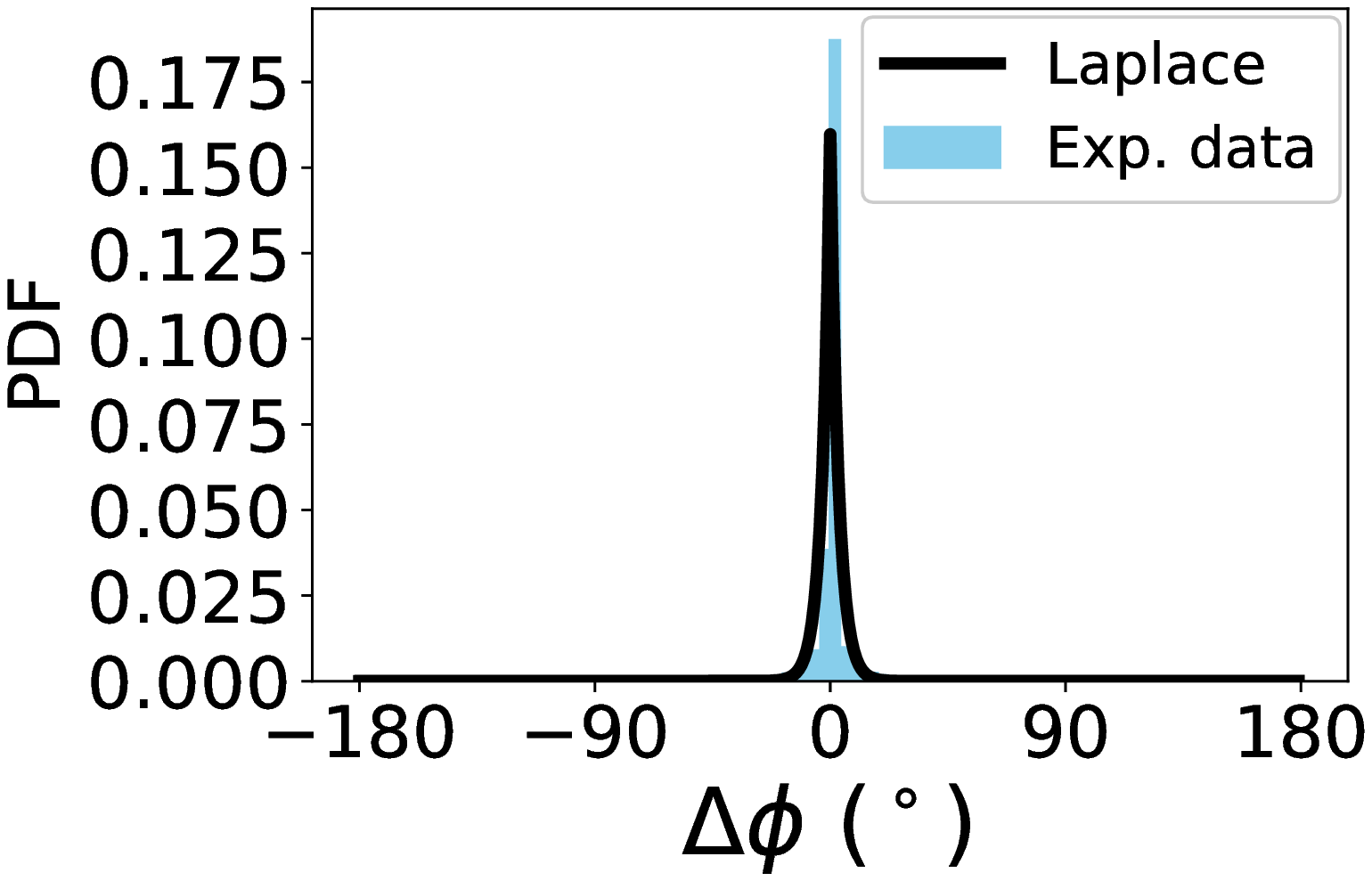}  
   \vspace{-0.5cm}
  \caption{$\Delta t=10/60 $ s}
  \label{fig:sub-first}
\end{subfigure}
   \vspace{-0cm}
  \hspace{-0.2cm}
\begin{subfigure}{.32\textwidth}
  \centering
  \includegraphics[width=1\linewidth]{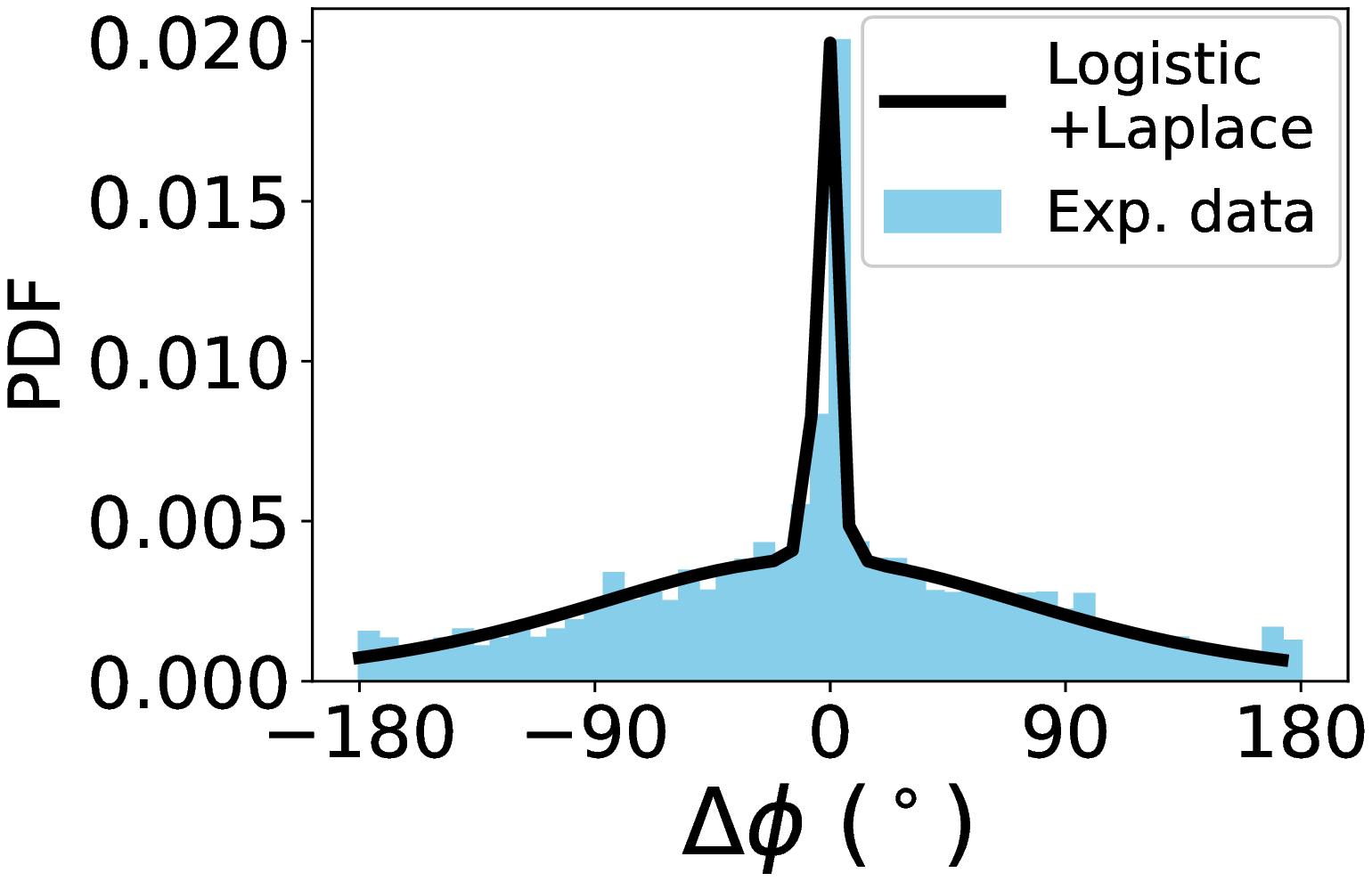}  
   \vspace{-0.5cm}
  \caption{$\Delta t=500/60 $ s}
  \label{fig:sub-first}
\end{subfigure}
   \vspace{-0cm}
  \hspace{-0.2cm}
 \begin{subfigure}{.32\textwidth}
  \centering
  \includegraphics[width=1\linewidth]{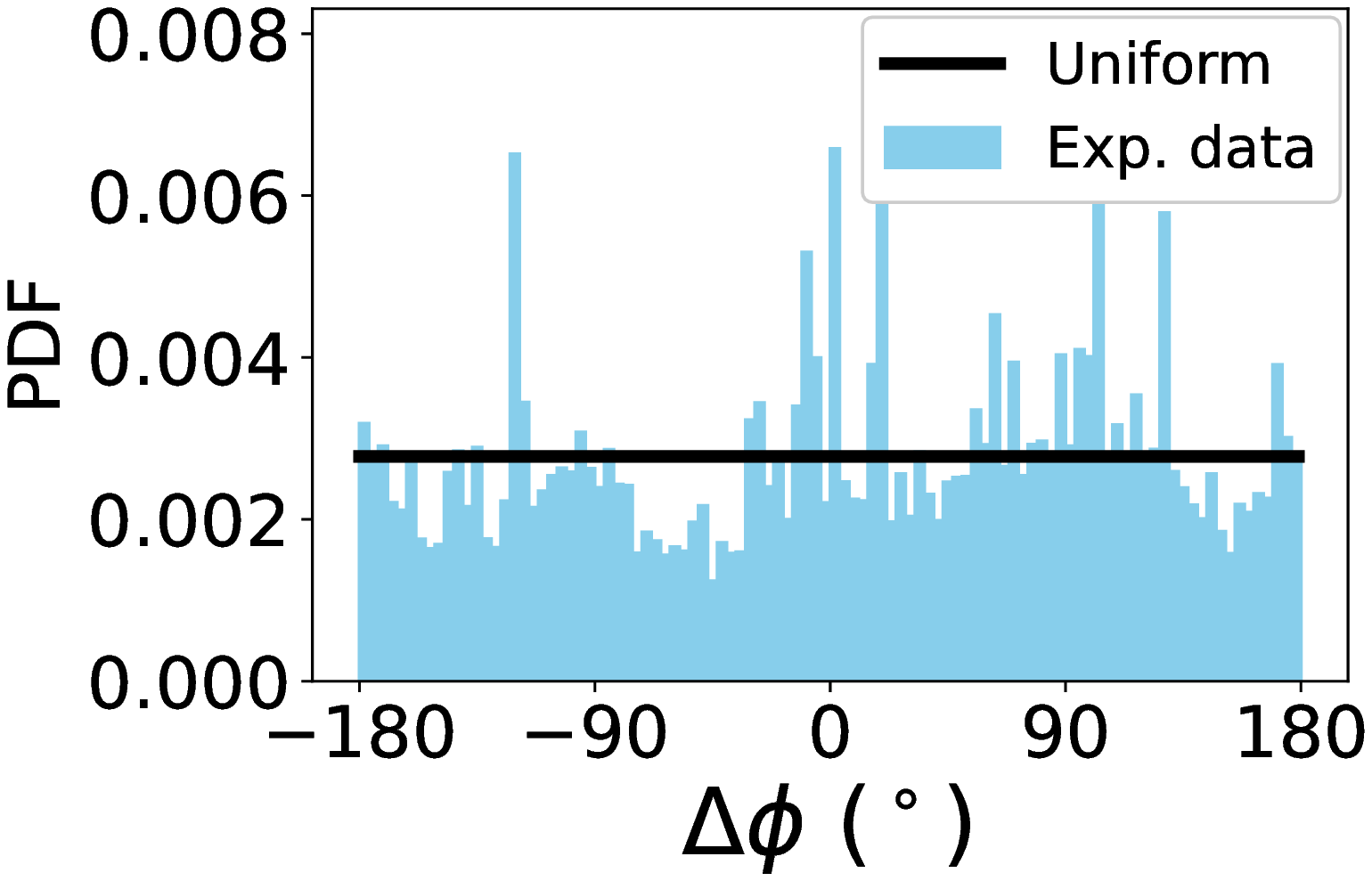}  
   \vspace{-0.5cm}
  \caption{$\Delta t=2000/60 $ s}
  \end{subfigure}
\caption{The distribution of the experimental data for azimuth change $\Delta \phi$ after $\Delta t$, and the best distribution fit among common distributions and mixed distributions.
} 
\label{fig:azimuthdifference_our}
    \end{minipage}
   \vspace{-0.2cm}
\end{figure*}

We evaluate the distribution fit for the experimental measurements of $\theta$, $\Delta \theta$, and $\Delta\phi$.  
In analyzing ${p_\theta }(\theta )$ and ${p_{\Delta \theta} }({\Delta \theta})$, we fit the experimental data to a set of common statistical distributions.
In analyzing ${p_{\Delta\phi} }(\Delta\phi )$, the PDFs of which have irregular shapes for some values of $\Delta t$, we fit the experimental data to the set of common distributions and mixed distributions of two different common distributions. 
As the error metric, we use the sum of squared errors (SSE)~\cite{statisticsDictionary} between the data and the fitted distribution.

\subsubsection{Statistical distribution of the polar angle $\theta$}  
We found \emph{Laplace distributions}, with means close to 90$^\circ$ (89.0$^\circ$--92.3$^\circ$) and scales ranging from 3.5 to 7.4, to best fit experimental  
data (see a summary 
in Table~\ref{our_polar_model} and a fit example in Fig.~\ref{polar_fit_our_data}). 
This is intuitive: 
it corresponds 
to humans having a 
bias for looking straight ahead, to the central parts of VR contents, without frequently tilting their heads.
 Among the 3~VR~games, the scale values are the largest for VK (4.3--7.4), which has more contents scattered along different latitudes than the other games. 
 Among the 3 interface types, the scale values are the smallest for headset VR (3.5--4.3), 
corresponding to users looking straight ahead rather than up and down. We hypothesize that this is due to 
the discomfort associated with 
tilting the head drastically while wearing a headset.

\subsubsection{Statistical models for polar angle change $\Delta \theta$} 
The experimental distributions of $ {\Delta \theta }$ for different $\Delta t$ values closely fit zero-mean Laplace distributions. 
The scales $b_{1,\theta}$ of the Laplace distributions that yield the best fit for different $\Delta t$ values in Lite are shown in Table~\ref{Table Polar Angle Change}. 
As  $\Delta t$ increases, the correlation between polar angles decreases, leading to the increase of the scale with $\Delta t$ (e.g., from 0.305 for $\Delta t = 5/60$~s to 3.407 for $\Delta t = 100/60$~s for desktop VR). 
Among the 3 interface types, headset VR has the largest $b_{l,\theta}$ when $\Delta t$ is small (e.g., 
1.92 vs.~1.48 and 1.41 for $\Delta t = 30/60$~s), 
indicating that 
the polar angle changes more rapidly. 
This is due to the ease of changing viewport orientation over a small time interval in headset VR.

\subsubsection{Statistical models for azimuth angle change $\Delta \phi$} In our examinations, for some $\Delta t$ the distributions of $\Delta \phi$ appeared to have canonical shapes, while for others they appeared as a mixture of distributions. Thus we fit the experimental data to both common and mixed distributions. We present the best distribution fits and their parameters, for a subset of $\Delta t$ values, in desktop VR for all 3~games jointly, in Table~\ref{tab:model_of_azimuth_our}. 
For the cases of mixed distributions, logistic and Laplace in these examples, the PDF of the mixed distribution 
is written as
\begin{equation*}
\label{eqn:mixture}
\begin{array}{l}
{f_{{\rm{mixed}}}}\left( x \right) = {p_l}\frac{1}{{2{b_l}}}\exp \left( { - \frac{{\left| x \right| - {\mu _l}}}{{{b_l}}}} \right)\frac{1}{{1 - \exp \left( { - \frac{{180 - {\mu _l}}}{{{b_l}}}} \right)}} +\\
 \left( {1 - {p_l}} \right)\frac{{\exp \left( { - \frac{{\left| x \right| - {\mu _{lo}}}}{{{b_{lo}}}}} \right)}}{{{b_{lo}}{{\left( {1 + \exp \left( { - \frac{{\left| x \right| - {\mu _{lo}}}}{{{b_{lo}}}}} \right)} \right)}^2}}}\frac{1}{{\left( {2{{\left( {1 + \exp \left( { - \frac{{180 - {\mu _{lo}}}}{{{b_{lo}}}}} \right)} \right)}^{ - 1}} - 1} \right)}}
\end{array}
\end{equation*}
where $\mu_l$ and $b_l$ are the mean and the scale of the Laplace distribution, $\mu_{lo}$ and $b_{lo}$ are the mean and the scale of the logistic distribution, and $p_l$ is used to alter the fractions of the logistic  and the Laplace distributions. 

{\it The best distribution fit for $\Delta\phi$ changes with $\Delta t$.} When $\Delta t$ is small (i.e., when $\Delta t < \beta_1$), $\Delta\phi$ is best modeled by a Laplace distribution with a relatively small scale.  When  $\beta_1 \leqslant \Delta t < \beta_2$ , $\Delta\phi$ is best modeled by a mixture of logistic and Laplace distributions, corresponding to users' tendency 
to change their head orientations only slightly over these time intervals (i.e., $-15^\circ<\Delta \phi<15^\circ$). Finally,  when   $\Delta t \geqslant \beta_2$, the individual angle observations become uncorrelated and are best modeled by a uniform distribution $\mathcal{U}_{\left[ { - {{180}^\circ },{{180}^\circ }} \right)}$. From the collected desktop VR pose data, we obtain $\beta_1=189/60$~s and $\beta_2=1549/60$~s. Examples of these three cases 
are shown in Fig.~\ref{fig:azimuthdifference_our}. 
For the other 2 VR interfaces, we observe similar patterns, but $\beta_1$ and $\beta_2$ are different: $\beta_1=244/60$~s and $\beta_2=1003/60$~s for headset VR,  $\beta_1=496/60$~s and $\beta_2=1006/60$~s for phone-based VR.

\begin{table}[t]
\caption{
Distributions that best fit the experimental 
azimuth angle change $\Delta \phi$ over $\Delta t$, and the SSE between the experimental and the fitted distributions.} 
\vspace{-0.1cm}
\label{tab:model_of_azimuth_our}
\begin{tabular}{c|c c c}
\hline
$\Delta t $ (s)&  \multicolumn{3}{|c}{Best fit} \\
 & Distribution & Parameters& SSE \\
\hline
10/60  & Laplace&$\mu_l=0.0$, $b_l=3.130$ &$1.465 \times {10^{ - 3}}$\\
60/60  & Laplace& $\mu_l=0.0$, $b_l=13.868$&$7.016\times {10^{ - 3}}$ \\
200/60&   \multirow{2}{3em}{Logistic+\\Laplace}&  \multirow{3}{11em} {$\mu_{lo}=-0.1$, $b_{lo}=28.54$, $\mu_{l}=0.1$, $b_{l}=0.24$, \\$p_l=0.36$}& $3.650\times {10^{ - 3}}$ \\
\vspace{-0.35cm}
\\ \\ \\
500/60 &  \multirow{2}{4em}{Logistic+\\Laplace} &  
\multirow{3}{11em}{$\mu_{lo}=-0.4$, $b_{lo}=53.35$,  \\$\mu_{l}=4.2, b_{l}= 0.34$, \\$p_l=0.13$  }&$2.499\times {10^{ - 3}}$\\
\vspace{-0.35cm}
\\ \\ \\
2000/60 &  Uniform & $\Delta \phi  \sim \mathcal{U}_{\left[ { - {{180}^ \circ },{{180}^ \circ }} \right)}$& $1.739\times{10^{ - 4}}$     \\
\hline
\end{tabular}
\vspace{-0.7cm}
\end{table}

\subsection{Position Model}
\label{modifiedrwp}

\subsubsection{MRWP model with random pause times}
\label{rwp_exp}

\begin{figure*}[h]
\centering
\begin{minipage}[t]{0.235\linewidth}
   \vspace{0pt}
\centering
   \includegraphics[width=\textwidth]{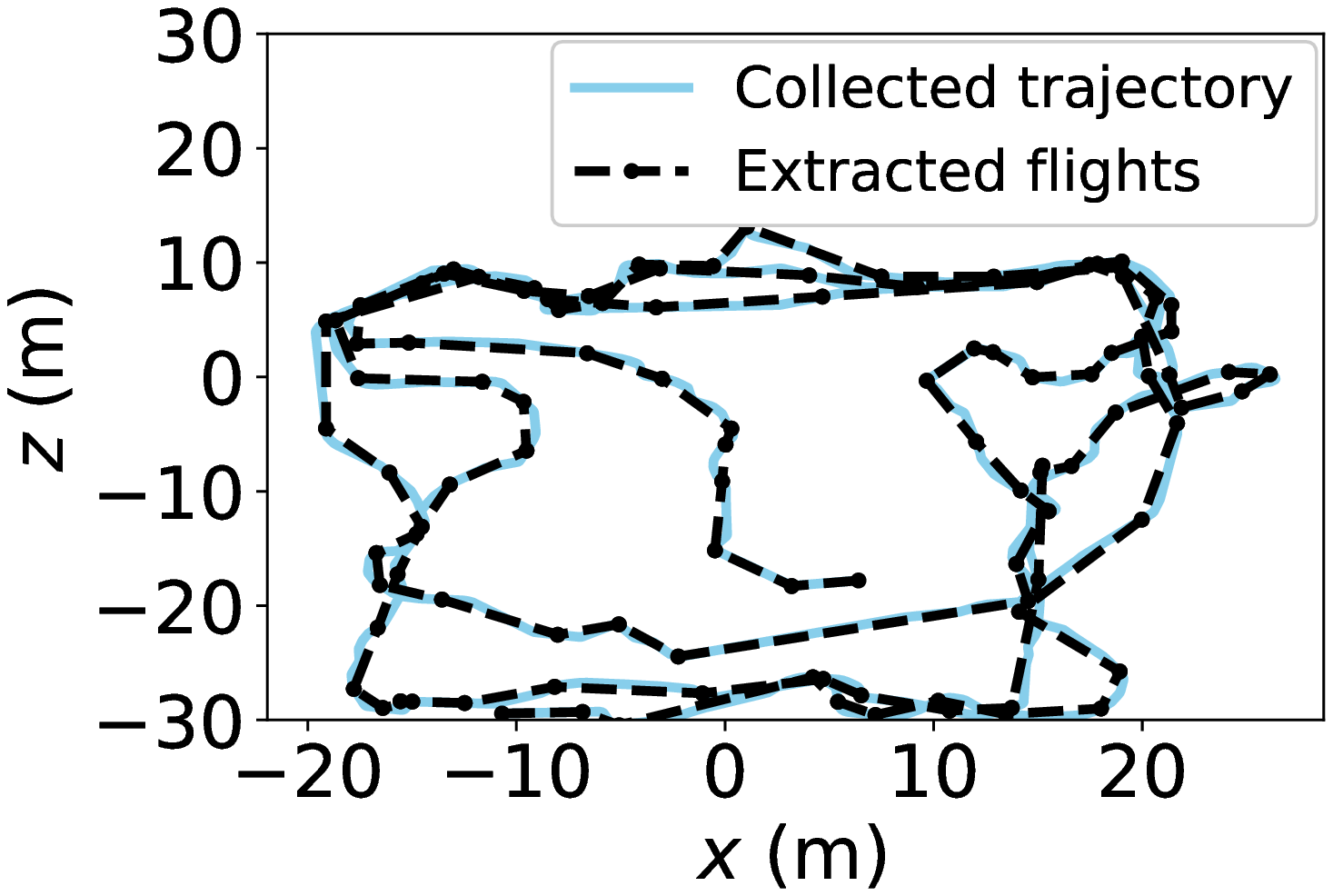}
       \vspace{-0.7cm}
  \caption{The collected trajectory for one user in Lite and the extracted flights.  
  }
   \label{flight_sample}
\end{minipage}
\hspace{-0.02cm}
\begin{minipage}[t]{0.235\linewidth}
   \vspace{0pt}
\centering
\includegraphics[width=\textwidth]{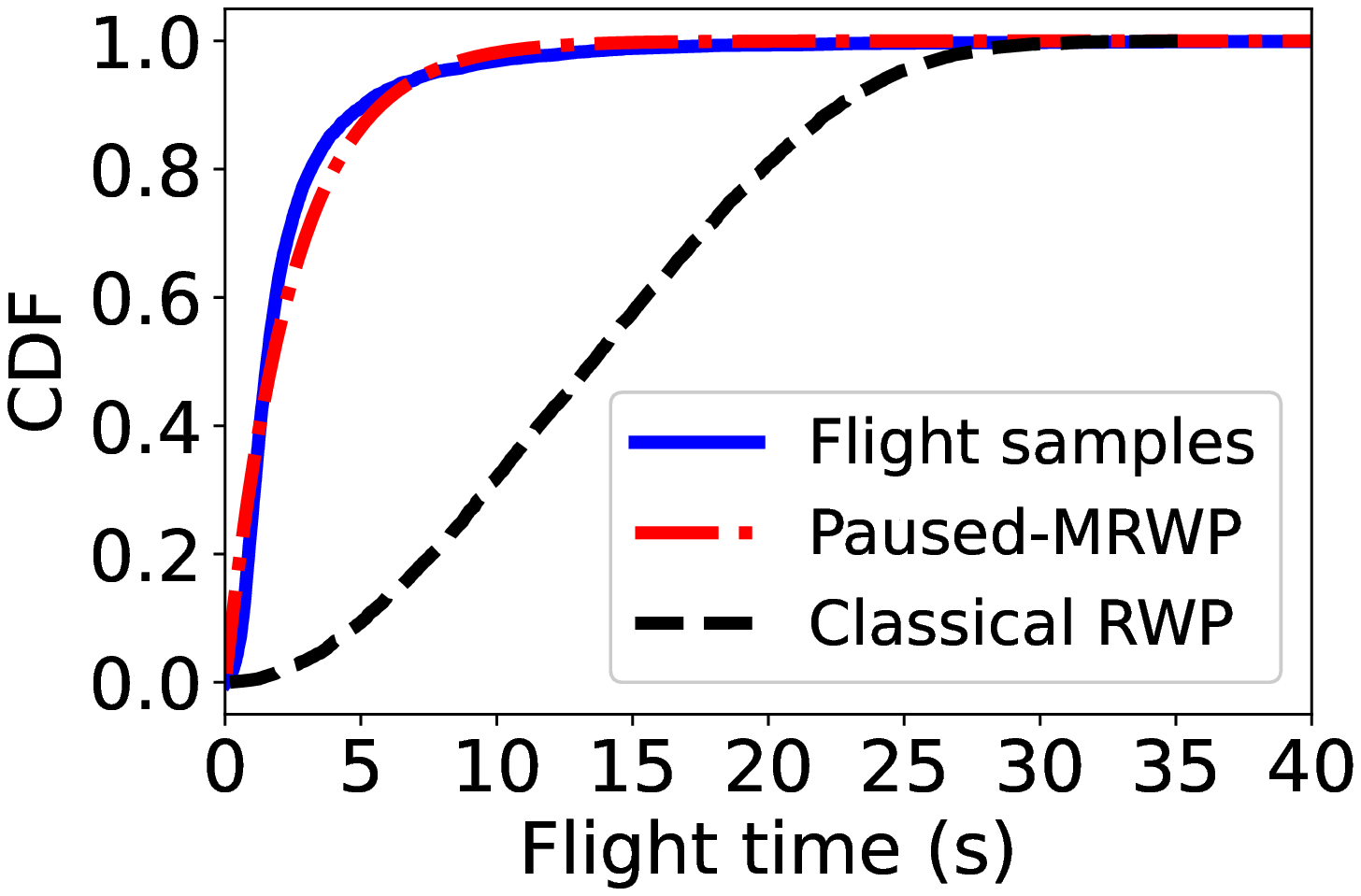}
    \vspace{-0.7cm}
\caption{CDF of the flight time for collected  samples, paused-MRWP, and classical RWP.}
\label{VKflight}
\end{minipage}
\hspace{0.3cm}
 \begin{minipage}[t]{0.235\linewidth}
    \vspace{0pt}
    \centering
\includegraphics[width=\textwidth]{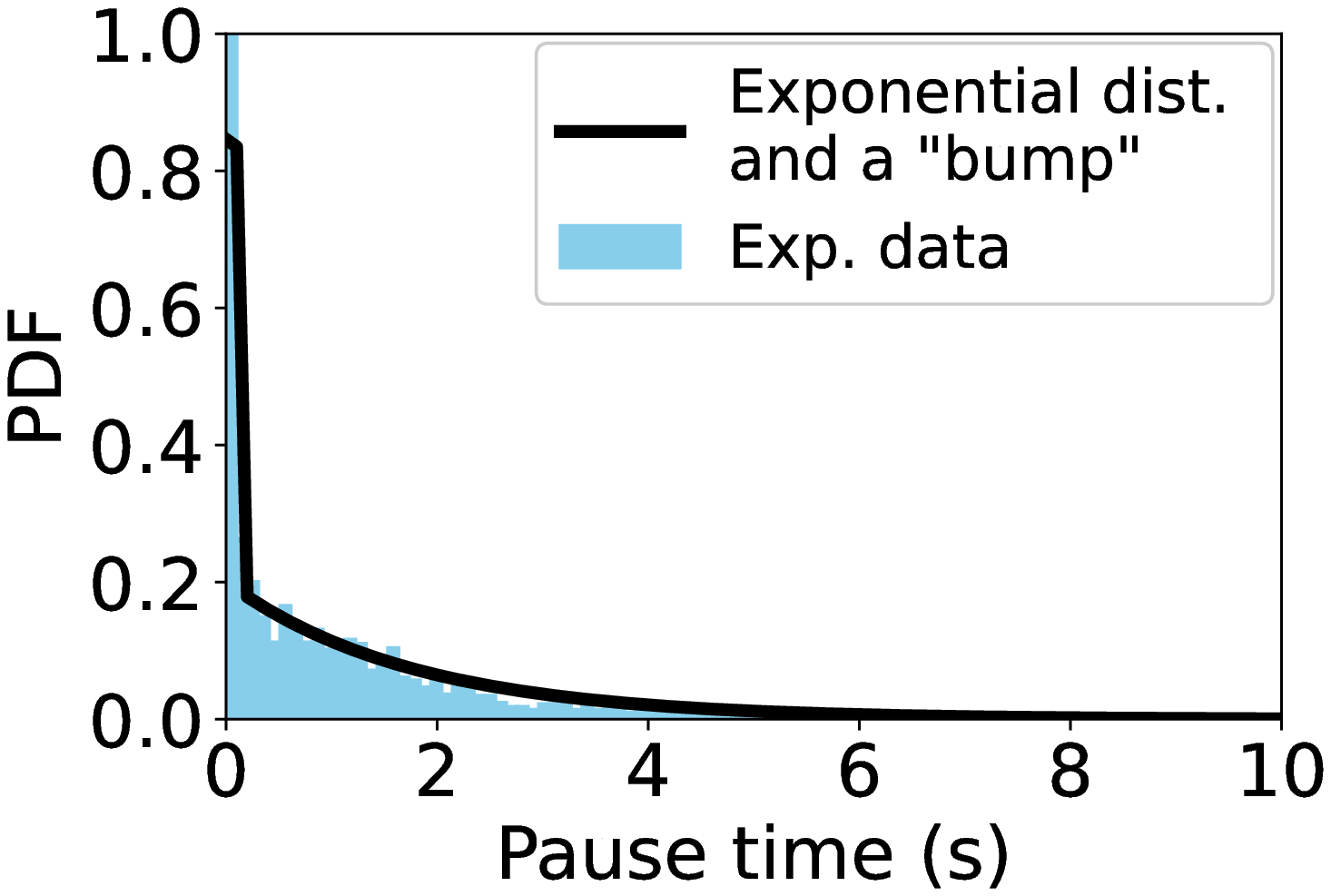}
    \vspace{-0.7cm}
\caption{The pause time for all users and all games. 
}
\label{pausetime_distribution}
\end{minipage}
\hspace{-0.02cm}
\begin{minipage}[t]{0.24\linewidth}
   \vspace{0pt}
\centering
  \includegraphics[width=\textwidth]{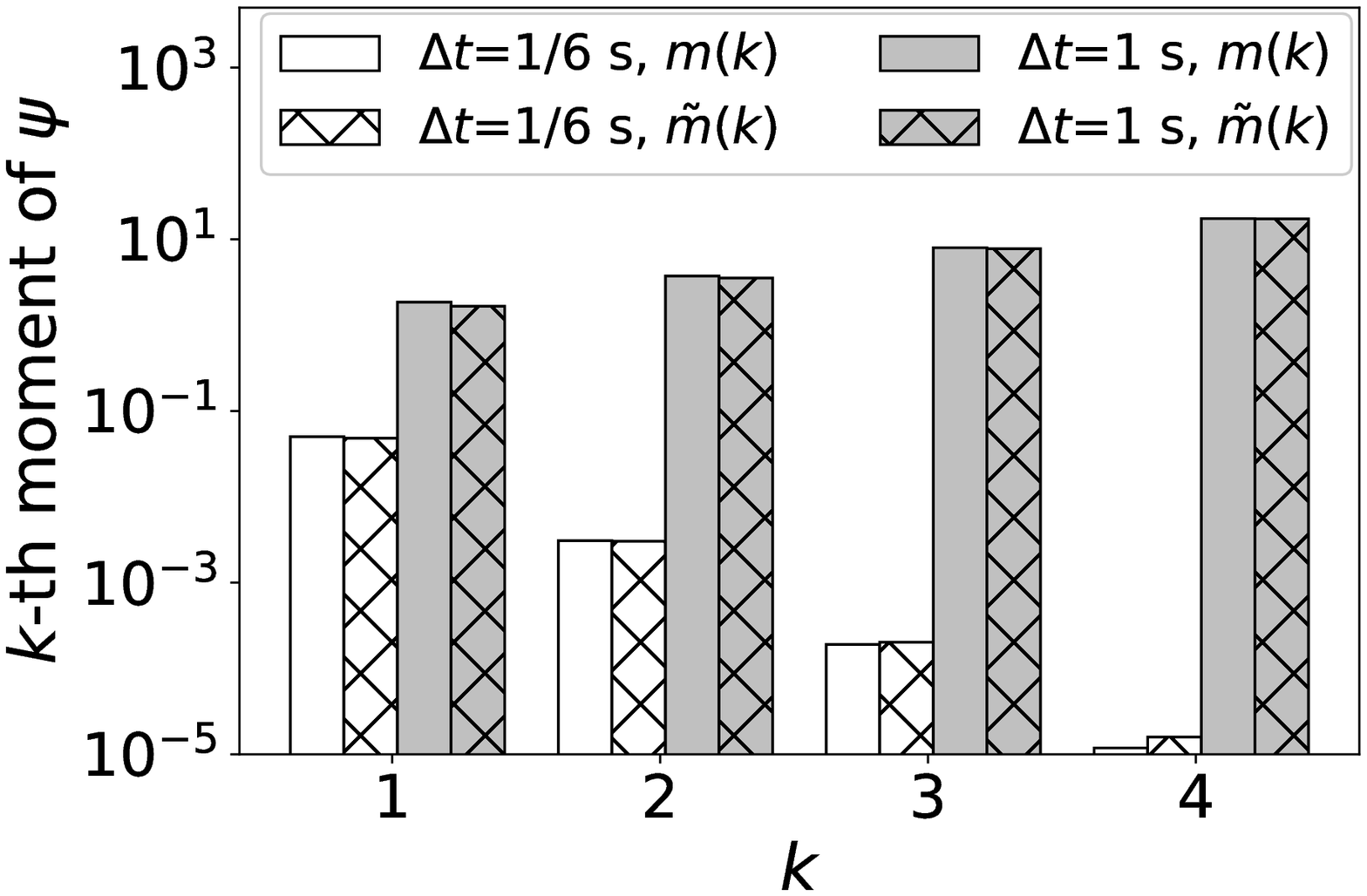}
    \vspace{-0.7cm}
    \caption{The analytical and empirical results of the $k$-th moment of position displacement.
    }
  \label{fig:moment}
  \end{minipage}
 \vspace{-0.4cm}
\end{figure*}

In this section we introduce our 
model for VR viewport position. 
We focus on viewport position change over 
time interval $\Delta t$, in order to analyze the ViS of VR frames that are $\Delta t$ apart.

Adopting the axis notation 
common in computer graphics{ \cite{gameEngineDevelopment,Unity20}}, the viewport positions in the Earth coordinates $XYZ$ (shown in Fig.~\ref{fig:deviceorientation}) are denoted as $\left(x,y,z\right)$, where $y$ is the height of the viewport, and $x$ and $z$ are the coordinates of the viewport positions in the $XZ$-plane (i.e., the ground plane). We assume that $y$ is constant, e.g., $y$ can be set to the human eye level. While changing $y$ is important in some specific contexts, such as exergames \cite{Supernatural,FITXR}, \emph{$y$ is fixed in the vast majority of typical VR experiences}, and in native Oculus Integration app development, to avoid disorienting the users when they sink below or float above the ground in the virtual environment~\cite{OculusDevelopment}.

To model the change of $x$ and $z$, we propose a \emph{paused-MRWP} position model in the infinite plane, 
based on our collected pose data and the modified RWP~\cite{XLin13}.  The model consists of an infinite sequence of points  $ {{W_n}}, {n \in  \mathbb{N}^{+} }$, called \emph{waypoints}, the pause time $S_n$ at each waypoint, the duration $T_n$ to move along a straight line from $W_{n}$ to $W_{n+1}$ with a constant velocity $v$, and the included angle $\alpha_n$ between  $\overrightarrow{W_{n }{W_{n+1}}}$ and the abscissa. The waypoint is expressed as ${W_n}=\left(x_n,z_n\right)$ where $x_n$ and $z_n$ are the coordinates of waypoints in the $XZ$-plane.  The vector $\overrightarrow{W_{n }{W_{n+1}}}$ is called the $n$-th \emph{flight}, and $\alpha_n$ is called the direction of the $n$-th flight. At time 0, the viewport is at $W_1$, and starts to move towards $W_2$. Due to the constant velocity, the flight time $T_n$ for the $n$-th flight is proportional to its length ${\left\| \overrightarrow{W_{n }{W_{n+1}}} \right\|}$. The reason for assuming a constant velocity is that although acceleration can be used to produce more realistic movements, constant-velocity VR movement is known to be more comfortable than the movement with acceleration or deceleration~\cite{VRVelocity}. 
We further assume that $T_n$, $ S_n$, and $ \alpha_n$ are all i.i.d. distributed over $n$. Based on the 
experimental data in desktop VR, we will propose the models for $T_n$, $S_n$, and $ \alpha_n$.

    The collected data shows that the viewpoint movement in the $XZ$-plane is well approximated by a sequence of flights. We apply the standard angle model proposed in \cite{I.Rhee11} to extract flights from the 
    trajectories.
    Fig.~\ref{flight_sample} plots the 
    trajectory in $XZ$-plane of one user in Lite and the extracted flights. Although the viewport does not move in a perfectly straight line during each 
    flight, the trajectory is close to it. 

\textbf{Modeling flight duration.} Our experimental data 
demonstrates that $T_n$ is exponentially distributed, and confirms that the paused-MRWP  better models the flight time in VR than 
the classical RWP models \cite{C.Bettstetter03}. { Specifically, the PDF of the flight times $T_n$, denoted as $f_{T_n}\left(t\right)$, is modeled as $\mu {e^{ - \mu {t}}}$.} Fig.~\ref{VKflight} shows the cumulative distribution function (CDF) of flight times for our collected flight samples, the paused-MRWP model with the best fitted $\mu$, and the classical RWP with a constant velocity in VK game. We see that the flight times of the paused-MRWP model match the measurements better statistically, with the SSE  as low as 0.0021. 
The exponential distribution with the best fitted $\mu$ will be used to model the flight times 
in \S\ref{sec:visibilitysimilarity} and \S\ref{sub:evaluation}. 
 
\textbf{Modeling pause time.} We model the pause time $S_n$ according to the collected data. Fig.~\ref{pausetime_distribution} shows that the exponential distribution with a ``bump" around zero is a good fit to its distribution. 
The PDF of the pause times, denoted as $f_{S_n}\left(s\right)$, is modeled as $\left( {1 - c} \right)\lambda {e^{ - \lambda {s}}} + c\delta\left(s\right)$, where $\lambda {e^{ - \lambda {s}}}$ stands for the PDF of the exponential distribution with parameter $\lambda$, Dirac delta function $\delta\left(s\right)$ models the ``bump" around zero pause time, and $c$ represents the fraction of the ``bump". 

\textbf{Modeling flight direction.} From our pose dataset, the flight angles $\alpha_n$ follow the uniform distribution on $\left[0, 2\pi \right)$.

 With the developed paused-MRWP model, we will focus on observation intervals of duration $\Delta t$, and derive an analytical expression for the moment generating function (MGF) of the displacement between two viewport positions.

\begin{definition}[MGF of position displacement]
\label{def:position displacement}
 Consider an observation interval $\left[t_s, t_s + \Delta t \right]$, where $t_s$ is a sample from $\mathcal{U}_{\left[ {0,T} \right]}$. Let ${\bf X}_{\rm ref}$ and ${\bf X}_{\rm nov}$ be the viewport positions at $t_s$ and $t_s+\Delta t$, respectively. 
 Then, the position displacement $\psi $ is defined as $ {{{\left\| {{{\bf{X}}_{{\rm{nov}}}} - {{\bf{X}}_{{\rm{ref}}}}} \right\|}^2}} $, and the MGF of $\psi$, denoted by $\mathcal{M}_{\psi}\left(\tau\right)$, is given by $\mathcal{M}_{\psi}\left(\tau\right) = \lim_{T \to \infty} \mathbb{E}\left[\exp \left( \tau \cdot \left\| \mathbf{X}_{\rm nov} - \mathbf{X}_{\rm ref}\right\|^2\right)\right]$. 
\end{definition}

We note that during a small $\Delta t$ 
in VR systems, $\mathbf{X}_{\rm ref}$ and $\mathbf{X}_{\rm nov}$  are not necessarily at the waypoints. This is in contrast to conventional applications of RWP models, 
where 
 the movement is observed at a larger timescale \cite{I.Rhee11,C.Bettstetter03}.

\subsubsection{Analysis}

In analyzing $\mathcal{M}_{\psi}\left(\tau \right)$, there are two mutually exclusive and exhaustive cases to consider. Setting $T_0=S_0=0$ as the auxiliary variables, the \emph{Case 1} is the case that $\exists j \in \mathbb{N}^{+}$, $\sum\limits_{n = 0}^j {{T_n} + \sum\limits_{n = 0}^{j - 1} {{S_n}} } <t_s< \sum\limits_{n = 0}^j {{T_n} + \sum\limits_{n = 0}^j {{S_n}} } $, i.e., we start observing the process when the movement is paused. Let $\Lambda$ denote the event that Case 1 holds. Let $\Lambda^\prime$ denote the complement of $\Lambda$, and $\Lambda^\prime$ is the event that \emph{Case 2} holds.
The Case 2 is the case that $\exists j \in \mathbb{N}^{+}$, $\sum\limits_{n = 0}^{j-1} {{T_n} + \sum\limits_{n = 0}^{j - 1} {{S_n}} } <t_s< \sum\limits_{n = 0}^j {{T_n} + \sum\limits_{n = 0}^{j-1} {{S_n}} } $, i.e., we start observing the process during a flight. 
We will first obtain the $k$-th ($k\geqslant 1$) moment of 
$\psi$, $m_\Lambda\left(k\right)$ and $m_{\Lambda^\prime}\left(k\right)$, for Cases 1 and  2 in Lemmas~\ref{displacement_case1} and \ref{displacement_case2}. Combined with the probability that Case 1 
holds given  in Lemma~\ref{Lemma: Probability of Starting During Flight},  we will obtain $\mathcal{M}_{\psi}\left(\tau \right)$ in Theorem~\ref{thm:displacement}.

\begin{lemma} \label{displacement_case1}
Assume Case 1 holds. Let $T^\prime_j=0$, and $S^\prime_j$ be the remaining pause duration after $t_s$ in the same pause interval. Let  $T^\prime_i=T_i$ and $S^\prime_i=S_i$ for $i>j$. We define the events $A_n$ and $B_n$  as
\begin{align*}
{A_n} &= \left\{ \begin{array}{l}
\left\{ {{\xi _1} < 0} \right\},\ n = 1\\
\left\{ {{{\tilde \xi }_{n - 1}}\geqslant 0\ \&\ {\xi _n} < 0} \right\},\ n \geqslant 2 
\end{array} \right. \\
B_n &= \left\{ \xi_n \geqslant 0 \ \& \ \tilde{\xi}_n < 0 \right\}, \  n \geqslant 1
\end{align*}
where  ${\xi _n} \buildrel \Delta \over = \Delta t - \sum\limits_{i = j }^{j + n-1} {{S^\prime_i}}  - \sum\limits_{i = j }^{j + n-1} {{T^\prime_i}}  $ and $\tilde{\xi}_n \triangleq \Delta t -  \sum\limits_{i = j}^{j + n-1} {{S^\prime_i}}  - \sum\limits_{i = j }^{j + n } {{T^\prime_i}}  $. $A_n$ is the event that we end the observation in a pause interval ($\exists {j^\prime} \in \mathbb{N}^{+}$, $\sum\limits_{n = 0}^{j^\prime} {{T_n} + \sum\limits_{n = 0}^{{j^\prime} - 1} {{S_n}} } < t_s + \Delta t < \sum\limits_{n = 0}^{j^\prime} {{T_n} + \sum\limits_{n = 0}^{j^\prime} {{S_n}}} $)
and that there are $n-1$ complete flights in $[t_s, t_s + \Delta t]$. $B_n$ is the event that $\exists j \in \mathbb{N}^{+}$, $\sum\limits_{n = 0}^{j-1} {{T_n} + \sum\limits_{n = 0}^{j - 1} {{S_n}} } <t_s+\Delta t< \sum\limits_{n = 0}^j {{T_n} + \sum\limits_{n = 0}^{j-1} {{S_n}} } $ and that there are $n-1$ complete flights in $[t_s, t_s + \Delta t]$. Then, ${m_\Lambda }\left( k \right)$  is given by 
\begin{eqnarray} \label{Eqn: displacement1}
{m_\Lambda }\left( k \right) = \sum\limits_{n = 2}^\infty  {\mathbb{E}\left[ {{\psi^k}\mathds{1}\left( {{A_n}} \right)} \right]}  + \sum\limits_{n = 1}^\infty  {\mathbb{E}\left[ {{\psi^k}\mathds{1}\left( {{B_n}} \right)} \right]} 
 \end{eqnarray}
 where  
 \begin{equation*}
 \begin{aligned}
{\mathbb{E}\left[ {{\psi^k}\mathds{1}( {{A_n}} )} \right]}&={n+k-2 \choose n-2} \sum\limits_{h = 0}^{n - 2}  {{\mu }{\lambda }{g_{n - 2,h,h + 2,k,2k + n + h + 1}}}, \\
{\mathbb{E}\left[ {{\psi^k}\mathds{1}( {{B_n}} )} \right]}&={n+k-1 \choose n-1} \sum\limits_{h = 0}^{n-1}  {{\lambda }{g_{n - 1,h,h + 1,k,2k + n + h + 1}}} 
 \end{aligned}
 \end{equation*}
and
 ${g_{i,h,n,k,m}} \triangleq \frac{{\left( {2k} \right)!{\mu ^i}{\lambda ^h}{{\left( {\Delta t} \right)}^{m - 1}}{e^{ - \mu \Delta t}}_1{F_1}\left( {n;m; - \left( {\lambda  - \mu } \right)\Delta t} \right)}}{{\left( {m - 1} \right)!}} \times \\ {v^{2k}}{i \choose h}{c^{i-h}}{{\left( {1 - c} \right)}^{h}}$, with ${}_1{F_1}\left(n;m;z\right) $ being the confluent hypergeometric function of the first kind. 
\end{lemma}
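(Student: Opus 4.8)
The plan is to reduce $m_\Lambda(k)$ to the two partial moments $\mathbb{E}[\psi^k\mathds{1}(A_n)]$ and $\mathbb{E}[\psi^k\mathds{1}(B_n)]$ appearing in \eqref{Eqn: displacement1}; that decomposition itself is immediate once one observes that $\{A_n\}_{n\geqslant 2}$, $\{B_n\}_{n\geqslant 1}$ and $A_1$ partition $\Lambda$ and that $\psi\equiv 0$ on $A_1$ (there the whole window $[t_s,t_s+\Delta t]$ lies inside one pause, so there is no displacement). The first substantive step is to identify the joint law of the durations that matter for $[t_s,t_s+\Delta t]$ conditioned on $\Lambda$. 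Letting $T\to\infty$ with $t_s\sim\mathcal{U}_{[0,T]}$, renewal theory — the inspection paradox for the alternating flight/pause renewal process, together with the memorylessness of the exponential — gives that, given $\Lambda$, the residual pause $S'_j$ is $\mathrm{Exp}(\lambda)$ (the atom of the pause law at $0$ contributes zero length and is never hit, so it drops out of the equilibrium law), while $T_{j+1},T_{j+2},\dots$ are fresh i.i.d.\ $\mathrm{Exp}(\mu)$, the pauses $S_{j+1},S_{j+2},\dots$ are fresh i.i.d.\ with density $(1-c)\lambda e^{-\lambda s}+c\,\delta(s)$, the directions $\alpha_{j+1},\alpha_{j+2},\dots$ are fresh i.i.d.\ $\mathcal{U}_{[0,2\pi)}$, and all are mutually independent.

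Next I would write $\psi$ geometrically on each event. On $A_n$ the path leaves $W_{j+1}$, performs exactly $n-1$ complete flights and is at rest at $W_{j+n}$ when the window closes, so $\mathbf{X}_{\rm nov}-\mathbf{X}_{\rm ref}=v\sum_{i=1}^{n-1}T_{j+i}\mathbf{u}_i$ with $\mathbf{u}_i=(\cos\alpha_{j+i},\sin\alpha_{j+i})$, whence $\psi=v^2\bigl\lVert\sum_{i=1}^{n-1}T_{j+i}\mathbf{u}_i\bigr\rVert^2$; on $B_n$ there is one extra partial flight of duration $\xi_n$ (a deterministic function of $S'_j$ and the intervening durations) in a fresh uniform direction $\mathbf{u}_n$, so $\psi=v^2\bigl\lVert\sum_{i=1}^{n-1}T_{j+i}\mathbf{u}_i+\xi_n\mathbf{u}_n\bigr\rVert^2$. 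Integrating out the directions first reduces each case to a Pearson planar random walk: for fixed magnitudes $r_1,\dots,r_m$, $\mathbb{E}_{\text{dir}}\bigl[\lVert\sum_i r_i\mathbf{u}_i\rVert^{2k}\bigr]=(k!)^2\sum_{p_1+\cdots+p_m=k}\prod_i r_i^{2p_i}/(p_i!)^2$, which I would prove by writing the characteristic function of $\sum_i r_i\mathbf{u}_i$ as $\prod_i J_0(r_i\rho)$, expanding each Bessel function as a power series in $\rho^2$, and applying the radial Laplacian $k$ times at the origin (only the $\rho^{2k}$ term survives, with coefficient $4^k(k!)^2$).

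What then remains is the integral of such a monomial in the durations over the region that cuts out $A_n$ (resp.\ $B_n$). Here I would split each full intervening pause into its exponential mode (weight $1-c$, rate $\lambda$) and its delta mode (weight $c$, length $0$): the delta-mode pauses vanish from the length constraints, producing the binomial weights $\binom{i}{h}c^{i-h}(1-c)^h$, with $h$ the number of exponential-mode intervening pauses. For a fixed mode configuration, the survival factor of the pause that straddles the end of the window cancels the $e^{\pm\lambda}$ weights of the remaining pauses and of the residual pause, and one is left with a Liouville/Dirichlet integral of a monomial in the flight durations times $e^{-(\mu-\lambda)\sum_i T_i}$ (plus, for $B_n$, an extra $(\Delta t-\text{total duration})$-block coming from $\xi_n$) over the simplex $\{\text{sum}\leqslant\Delta t\}$; by the integral representation of ${}_1F_1$ this evaluates to $(\Delta t)^{\bullet}$ times a ratio of factorials times ${}_1F_1(\bullet;\bullet;-(\mu-\lambda)\Delta t)$, and a Kummer transformation ${}_1F_1(a;b;z)=e^{z}{}_1F_1(b-a;b;-z)$ both puts it in the stated form and, with the leftover $e^{-\lambda\Delta t}$, yields the $e^{-\mu\Delta t}$. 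Finally, resumming the multinomial factor from the angle step — in effect the $x^k$-coefficient of $(1-4x)^{-m/2}$ with $m$ the number of flights — collapses to the single binomial $\binom{n+k-2}{n-2}$, giving the claimed formulas.

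The hard part will be this last multidimensional integral together with the combinatorics around it: tracking the $\mu$/$\lambda$/$c$ bookkeeping across all $2^{\,i}$ pause-mode configurations, recognising the constrained simplex integral as a confluent hypergeometric function, and making the Kummer transformation and the multinomial resummation land exactly on the closed form claimed (with the $B_n$ case, because of the extra partial flight, requiring a somewhat more involved Liouville integral than $A_n$). A secondary but genuine subtlety is the renewal-theoretic identification of the conditional law given $\Lambda$, in particular the disappearance of the atom of the pause distribution at $0$ from the residual-pause law.
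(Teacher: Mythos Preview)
Your route is the paper's route: partition $\Lambda$ into the events $A_n$ and $B_n$, write $\psi$ as the squared length of a planar random walk, integrate out the uniform directions first, and then reduce the duration integrals to a single $\mathfrak{s}$-integral that produces the confluent hypergeometric. The paper's appendix does exactly this, with the long display \eqref{eqn:large_eqn_displacement} playing the role of your Liouville/Dirichlet step; your renewal-theoretic identification of the conditional law under $\Lambda$ (in particular that the atom at $0$ in the pause law disappears from the residual pause) and your explicit Pearson moment formula are genuine improvements in rigor over the paper, which simply asserts that ``only terms with even powers survive and there are $\binom{n+k-2}{n-2}$ of them'' before writing down the answer.

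The gap is precisely where you flag the difficulty. Following your own recipe, the angular average combined with the Dirichlet integral over the simplex $\{T_{j+1}+\cdots+T_{j+n-1}=L\}$ yields
\[
\frac{(k!)^2\,L^{2k+n-2}}{(2k+n-2)!}\sum_{p_1+\cdots+p_{n-1}=k}\prod_{i}\binom{2p_i}{p_i}
\;=\;\frac{(k!)^2\,L^{2k+n-2}}{(2k+n-2)!}\,\bigl[x^k\bigr](1-4x)^{-(n-1)/2},
\]
whereas the lemma (and the paper's intermediate equality in \eqref{eqn:large_eqn_displacement}) requires the prefactor $(2k)!\binom{n+k-2}{n-2}$ in place of $(k!)^2[x^k](1-4x)^{-(n-1)/2}$. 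These two expressions coincide for $k=1$ (any $n$) and for $n=2$ (any $k$), but differ once $k\geqslant 2$ and $n\geqslant 3$: already at $n=3,\,k=2$ one gets $64$ versus $72$, and at $n=4,\,k=2$ one gets $120$ versus $144$. So the ``collapses to the single binomial $\binom{n+k-2}{n-2}$'' step is not merely hard, it is false as stated; your method, carried out faithfully, does \emph{not} land on the claimed closed form. (The paper's own proof is equally thin at this point---it counts surviving monomials but never computes their coefficients---so the mismatch reflects the target formula rather than a defect in your strategy.) A smaller but related slip: the pause that straddles $t_s+\Delta t$ in $A_n$ has the same mixture law as the interior pauses, so its survival probability is $(1-c)e^{-\lambda u}$, not $e^{-\lambda u}$; neither your sketch nor \eqref{eqn:large_eqn_displacement} carries this extra $(1-c)$.
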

\begin{proof}[  Proof]
 See Appendix~\ref{proofdisplacement_case1}.   Proof sketch: Since the collections of events $\left\{A_n\right\}$ and $\left\{B_n\right\}$ are mutually disjoint and collectively exhaustive, we write   ${m_\Lambda }\left( k \right)$ as in \eqref{Eqn: displacement1}. To calculate ${\mathbb{E}\left[ {{\psi^k}\mathds{1}\left( {{A_n}} \right)} \right]}$, we express the \mbox{$k$-th} moment of $\psi$ as ${v^{2k}}{\left\| {\sum\limits_{i = j+1}^{j+n - 1} {{T_i}{{\bf e}_i}} } \right\|^{2k}}$, where ${{\bf e}_i}$ denotes the unit vector whose direction represents the moving direction $\alpha_i$ of the $i$-th flight. Based on the property that 
$\alpha_i$ is i.i.d. uniformly distributed on $\left[0,2\pi\right)$ and on the distributions of flight and pause times, we get the expressions of ${\mathbb{E}\left[ {{\psi^k}\mathds{1}\left( {{A_n}} \right)} \right]}$. Similar techniques are used to obtain  ${\mathbb{E}\left[ {{\psi^k}\mathds{1}\left( {{B_n}} \right)} \right]}$.
\end{proof}

\begin{lemma} \label{displacement_case2}
Assume Case 2 holds. Let $S^\prime_{j-1}=0$, and $T^\prime_j$ be the remaining flight duration after $t_s$ in the same flight interval. Let $T^\prime_i=T_i$ for $i>j$ and $S^\prime_i=S_i$ for $i\geqslant j$. Let the events $A_n$ and $B_n$ be defined as  
\begin{align*}
{A_n} &= \left\{ \begin{array}{l}
\left\{ {{\xi _1} < 0} \right\},\ n = 0\\
\left\{ {{{\tilde \xi }_n}\geqslant0\ \& \ {\xi _{n + 1}} < 0} \right\},\ n\geqslant 1
\end{array} \right. \\
B_n &= \left\{ \xi_n \geqslant 0 \ \& \ \tilde{\xi}_n < 0 \right\}\ n \geqslant 1
\end{align*}
where ${\xi _n} \triangleq \Delta t - \sum\limits_{i = j-1}^{j+n - 2} {{S^\prime_i}}  - \sum\limits_{i = j}^{j+n-1} {{T^\prime_i}} $ and $\tilde{\xi}_n \triangleq \Delta t --\sum\limits_{i = j-1}^{j+n -1} {{S^\prime_i}}  - \sum\limits_{i = j}^{j+n-1} {{T^\prime_i}} $. Then, ${m_{\Lambda^{\prime}}}\left( k \right) $ is given by
\begin{eqnarray*} \label{Eqn: displacement2}
{m_{\Lambda^{\prime}}}\left( k \right) = \sum\limits_{n = 0}^\infty  {\mathbb{E}\left[ {{\psi^k}\mathds{1}\left( {{A_n}} \right)} \right]}  + \sum\limits_{n = 1}^\infty  {\mathbb{E}\left[ {{\psi^k}\mathds{1}\left( {{B_n}} \right)} \right]}
 \end{eqnarray*}
 where 
 \begin{align*}
{\mathbb{E}\left[ {{\psi^k}\mathds{1}\left( {{A_n}} \right)} \right]}= &\frac{{c^n\left( {2k} \right)!{\mu ^n}{{\left( {\Delta t} \right)}^{2k + n}}\exp \left( { - \mu \Delta t} \right)}}{{\left( {2k + n} \right)!}} + \\&\mathds{1}\left(n>0\right) {n + k  \choose n} \sum\limits_{h = 1}^{n } {g_{n,h,h,k,2k + n + h + 1}}, \\
{\mathbb{E}\left[ {{\psi^k}\mathds{1}\left( {{B_n}} \right)} \right]}=& {n + k - 1 \choose n-1}
\sum\limits_{h = 0}^{n - 1}{{\mu }{g_{n - 1,h,h + 1,k,2k + n + h + 1}}} .
 \end{align*}
\end{lemma}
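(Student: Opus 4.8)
The argument follows the same route as the proof of Lemma~\ref{displacement_case1}, the only change being that the observation window now \emph{starts} inside a flight rather than inside a pause. First I would check that, conditioned on $\Lambda'$, the families $\{A_n\}_{n\geq 0}$ and $\{B_n\}_{n\geq 1}$ are pairwise disjoint and exhaust the conditional sample space: $A_n$ is the subevent on which $[t_s,t_s+\Delta t]$ ends inside a flight, with $A_0$ the subevent on which the window never leaves the flight in progress at $t_s$, and $B_n$ the subevent on which the window ends inside a pause; the index $n$ records how many complete intervals of the complementary type lie inside the window. This yields $m_{\Lambda'}(k)=\sum_{n\geq 0}\mathbb{E}[\psi^k\mathds{1}(A_n)]+\sum_{n\geq 1}\mathbb{E}[\psi^k\mathds{1}(B_n)]$, so only the two families of expectations remain.

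Next I would write the displacement explicitly. Since the viewport moves in straight segments at the fixed speed $v$, on each of these events $\mathbf{X}_{\rm nov}-\mathbf{X}_{\rm ref}=v\sum_i d_i\mathbf{e}_i$, where the $d_i$ are the full or partial flight durations falling inside the window, $\mathbf{e}_i$ is the unit vector in direction $\alpha_i$, and on $A_0$ the sum collapses to the single term $v\,\Delta t\,\mathbf{e}_j$. Because the $\alpha_i$ are i.i.d.\ $\mathcal{U}_{[0,2\pi)}$ and independent of all the durations, I would condition on the durations and invoke $\mathbb{E}\big[\|\sum_i d_i\mathbf{e}_i\|^{2k}\big]=\sum_{p_1+\cdots+p_m=k}\binom{k}{p_1,\dots,p_m}^2\prod_i d_i^{2p_i}$, which follows by writing the squared norm as $Z^k\bar Z^k$ with $Z=\sum_i d_i e^{\mathrm{i}\alpha_i}$ and using that $\mathbb{E}[e^{\mathrm{i}m\alpha_i}]=0$ for every nonzero integer $m$. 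It then remains to average over the durations.

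For that I would exploit the paused-MRWP structure. Conditioned on $\Lambda'$, the residual duration of the flight in progress at $t_s$ is, by the memorylessness of the exponential flight law (equivalently, because the exponential is its own equilibrium distribution, which the limit $T\to\infty$ makes exact), again $\mathrm{Exp}(\mu)$; the subsequent complete flights are i.i.d.\ $\mathrm{Exp}(\mu)$, and the pauses that fall inside the window are i.i.d.\ draws from $(1-c)\lambda e^{-\lambda s}+c\,\delta(s)$. Splitting each such pause into its ``zero'' branch (probability $c$) and its $\mathrm{Exp}(\lambda)$ branch (probability $1-c$) produces the factor $\binom{i}{h}c^{i-h}(1-c)^h$, with $h$ the number of nonzero pauses; what is left is then an integral of the squared-multinomial weights against the relevant Gamma densities over the region cut out by the inequalities $\tilde\xi_n\geq 0,\ \xi_{n+1}<0$ (for $A_n$) or $\xi_n\geq 0,\ \tilde\xi_n<0$ (for $B_n$). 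Since the flights carry rate $\mu$ while the nonzero pauses carry rate $\lambda$, this convolution of two Gamma families with distinct rates evaluates to ${}_1F_1(n;m;-(\lambda-\mu)\Delta t)$, which is exactly what sits inside $g_{i,h,n,k,m}$; collapsing the sum over $p_1+\cdots+p_m=k$ via the Dirichlet integral and Vandermonde-type identities for $\sum\prod\binom{2p_i}{p_i}$ yields the stated $\binom{n+k}{n}$ and $\binom{n+k-1}{n-1}$ prefactors. The $A_n$ and $B_n$ computations differ only in bookkeeping: in $A_n$ the pauses are all interior to the window and may all degenerate to zero, which is the $h=0$ contribution, isolated as the explicit term $c^n(2k)!\mu^n(\Delta t)^{2k+n}e^{-\mu\Delta t}/(2k+n)!$; in $B_n$ the window ends inside a pause, which therefore has positive length so that $\lambda$ is always present and no such term splits off, and the truncated segment at the end of the window is a pause rather than a flight, which is what changes the leading constant (the extra factor $\mu$) and shifts the argument of ${}_1F_1$.

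I expect the main obstacle to be this last duration integral: simultaneously tracking the squared-multinomial weights left over from averaging out the flight directions and the pause-mixture binomial, while integrating the two-rate Gamma convolution over the simplex carved out by the $\xi$-inequalities and recognizing the result as a confluent hypergeometric function. A secondary point needing care is the justification that the residual flight at the random instant $t_s$ is exactly $\mathrm{Exp}(\mu)$ rather than size-biased, for which the $T\to\infty$ limit together with the equilibrium (inspection-paradox) property of the exponential is the key.
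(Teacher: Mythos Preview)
Your proposal is correct and follows essentially the same route as the paper, whose proof of Lemma~\ref{displacement_case2} is simply ``similar to that of Lemma~\ref{displacement_case1}.'' You correctly adapt the Lemma~\ref{displacement_case1} argument to the flight-start case: partition into $\{A_n\}\cup\{B_n\}$, write $\psi^k=v^{2k}\|\sum d_i\mathbf{e}_i\|^{2k}$, average over the i.i.d.\ uniform directions, split the pauses into their $\delta$ and $\mathrm{Exp}(\lambda)$ branches, and reduce the remaining duration integral to a confluent hypergeometric. Your extra remarks---the explicit complex formula $\mathbb{E}\big[\|\sum d_i\mathbf{e}_i\|^{2k}\big]=\sum_{|p|=k}\binom{k}{p}^2\prod d_i^{2p_i}$, the Dirichlet-integral mechanism, and the justification via memorylessness that the residual flight at $t_s$ is again $\mathrm{Exp}(\mu)$ in the $T\to\infty$ limit---are details the paper leaves implicit but are exactly what is needed to carry the Lemma~\ref{displacement_case1} computation over to Case~2.
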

\begin{proof}
The proof is similar to 
that of Lemma \ref{displacement_case1}. 
\end{proof}

Lemmas~\ref{displacement_case1} and \ref{displacement_case2} yield the expressions for the $k$-th moment of the position displacement for arbitrary $\Delta t$.
In VR systems, we 
analyze the ViS for small $\Delta t$. In this case, the number of flights and pauses in the observation interval is limited. The terms of ${\mathbb{E}\left[ {{\psi^k}\mathds{1}\left( {{A_n}} \right)} \right]}$ and ${\mathbb{E}\left[ {{\psi^k}\mathds{1}\left( {{B_n}} \right)} \right]}$, $n\leqslant N$, dominate  ${m_{\Lambda}}\left( k \right) $ and ${m_{\Lambda^{\prime}}}\left( k \right) $. $N=2$ a good choice for $N$ because 
the sum of the terms  ${\mathbb{E}\left[ {{\psi^k}\mathds{1}\left( {{A_n}} \right)} \right]}$ (or ${\mathbb{E}\left[ {{\psi^k}\mathds{1}\left( {{B_n}} \right)} \right]}$), $n\leqslant 2$, accounts for more than
98\% of ${m_{\Lambda}}\left( k \right) $ (or ${m_{\Lambda^{\prime}}}\left( k \right) $) when $t<1$~s and $k\leqslant 4$. Hence, we can simplify the calculation of ${m_{\Lambda}}\left( k \right) $ and ${m_{\Lambda^{\prime}}}\left( k \right) $ by discarding many terms corresponding to the cases of $n>2$ when $\Delta t$ is small (e.g., $\Delta t<1$~s).

\begin{lemma} \label{Lemma: Probability of Starting During Flight}
Let $t_s$ be a sample from $\mathcal{U}_{\left[ {0,T} \right]}$  and $p_T$ be the probability that $t_s$  falls in a flight interval. Then, $p = \mathop {\lim }\limits_{T \to \infty } {p_T}$ exists and is equal to 
$
p = \frac{{\lambda /\left( {1 - c} \right)}}{{\lambda /\left( {1 - c} \right) + \mu }}
$.
\end{lemma}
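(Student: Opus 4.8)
The plan is to recognize the alternating flight/pause structure of the paused-MRWP model as an alternating renewal process and to read off $p$ from the renewal--reward theorem. First I would define, for $n \in \mathbb{N}^{+}$, the $n$-th cycle length $C_n = T_n + S_n$. Since the $T_n$ are i.i.d. (exponential, rate $\mu$), the $S_n$ are i.i.d. (with PDF $(1-c)\lambda e^{-\lambda s} + c\,\delta(s)$), and $T_n$ is independent of $S_n$, the $C_n$ are i.i.d. with mean $\mathbb{E}[C] = \mathbb{E}[T_n] + \mathbb{E}[S_n]$. Here $\mathbb{E}[T_n] = 1/\mu$, and $\mathbb{E}[S_n] = (1-c)/\lambda$ because the Dirac mass at $0$ contributes nothing to the first moment ($\int_0^\infty s\,\delta(s)\,ds = 0$). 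Hence $0 < \mathbb{E}[C] = 1/\mu + (1-c)/\lambda < \infty$, and the renewal process generated by the partial sums $\sum_{m=1}^{n} C_m$ is well defined; moreover it is an \emph{ordinary} (not delayed) renewal process, since the trajectory starts at $W_1$ with a flight whose duration $T_1$ has the same law as every other $T_n$.

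Next I would introduce the reward ``time spent in a flight'': over the $n$-th cycle the motion is in a flight for exactly $T_n$ time units. Let $R(T) = \int_0^{T} \mathds{1}\{\text{the motion is in a flight at time } u\}\,du$ denote the total flight time accumulated in $[0,T]$. The renewal--reward theorem then yields $R(T)/T \to \mathbb{E}[T_n]/\mathbb{E}[C]$ almost surely as $T \to \infty$.

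It remains to connect this to $p_T$. Because $t_s \sim \mathcal{U}_{[0,T]}$ is drawn independently of the paused-MRWP trajectory, conditioning on the trajectory gives $p_T = \mathbb{P}(t_s \text{ lies in a flight interval}) = \mathbb{E}\!\left[R(T)/T\right]$. Since $R(T)/T \in [0,1]$, bounded convergence permits passing the limit inside the expectation, so $p = \lim_{T\to\infty} p_T = \mathbb{E}[T_n]/\mathbb{E}[C] = \dfrac{1/\mu}{1/\mu + (1-c)/\lambda}$. Multiplying numerator and denominator by $\mu\lambda/(1-c)$ gives $p = \dfrac{\lambda/(1-c)}{\lambda/(1-c) + \mu}$, as claimed; this also establishes that the limit exists.

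I do not expect a genuine obstacle: this is a textbook alternating-renewal computation. The only points requiring care are (i) verifying $\mathbb{E}[S_n] = (1-c)/\lambda$ despite the delta term, (ii) justifying the interchange of limit and expectation via bounded convergence, and (iii) confirming the ``ordinary renewal process'' claim so that no delayed-renewal correction is needed. If one prefers to avoid invoking the renewal--reward theorem, a fully elementary route works: let $M(T)$ count completed cycles by time $T$; the strong law of large numbers gives $\frac{1}{M(T)}\sum_{n=1}^{M(T)} C_n \to \mathbb{E}[C]$ and $\frac{1}{M(T)}\sum_{n=1}^{M(T)} T_n \to \mathbb{E}[T_n]$, the elementary renewal theorem gives $M(T)/T \to 1/\mathbb{E}[C]$, and the leftover (incomplete) cycle contributes $o(T)$ almost surely; combining these yields the same limit for $R(T)/T$.
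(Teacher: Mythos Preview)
Your argument is correct. The computation of $\mathbb{E}[S_n] = (1-c)/\lambda$, the identification $p_T = \mathbb{E}[R(T)/T]$, the a.s.\ convergence of $R(T)/T$ via the renewal--reward theorem, and the passage to the limit via bounded convergence are all sound, and the final algebraic simplification is right.

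The paper takes a slightly more hands-on route that coincides with the ``elementary alternative'' you sketch at the end: it sandwiches $p_T$ between two explicit ratios of partial sums of $T_i$'s and $S_i$'s (one counting a possibly extra pause, the other a possibly extra flight, over the last incomplete cycle), then invokes dominated convergence and the strong law of large numbers to show both bounds converge to $\frac{\lambda/(1-c)}{\lambda/(1-c)+\mu}$. Your main argument packages the same idea as a one-line application of the renewal--reward theorem for alternating renewal processes, which is cleaner and makes the structural reason for the answer transparent; the paper's sandwich argument is more self-contained in that it does not name any renewal-theoretic result, at the cost of a little extra bookkeeping with the boundary cycle. Either way the substance is the same: the long-run fraction of time in the ``flight'' state equals $\mathbb{E}[T_n]/(\mathbb{E}[T_n]+\mathbb{E}[S_n])$.
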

\begin{proof}
See Appendix~\ref{Appendix: Probability of Starting Druing a Flight}.
\end{proof}

\begin{theorem}[MGF of $\psi$]
\label{thm:displacement}
$\mathcal{M}_{\psi}\left(\tau\right) $ is given by
$ 
\mathcal{M}_{\psi}\left(\tau \right) =\sum_{k=0}^\infty {{m\left(k\right){\tau^k}} \mathord{\left/
 {\vphantom {{\left[ {\left( {1 - p} \right){m_\Lambda }\left( k \right) + p{m_{\Lambda ^\prime}}\left( k \right)} \right]{\psi^k}} {k!}}} \right.
 \kern-\nulldelimiterspace} {k!}}$
where $m(k)$ is the $k$-th moment of the position displacement $\psi$ and $m\left(k\right)=\left[ {\left( {1 - p} \right){m_\Lambda }\left( k \right) + p{m_{\Lambda ^\prime}}\left( k \right)} \right]$.
\end{theorem}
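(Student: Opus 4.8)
The plan is to recover $\mathcal{M}_\psi(\tau)$ from the moment sequence $m(k)$ of $\psi$ by Taylor-expanding the exponential, and to build each $m(k)$ by conditioning on whether the observation window $[t_s,t_s+\Delta t]$ starts inside a pause (event $\Lambda$, Case~1) or inside a flight (event $\Lambda'$, Case~2). The first thing I would establish is that $\psi$ is almost surely bounded, uniformly in both $t_s$ and $T$: over any interval of length $\Delta t$ the viewport moves with speed $v$ during flights and speed $0$ during pauses, so the total path length is at most $v\,\Delta t$, and by the triangle inequality $\|\mathbf{X}_{\rm nov}-\mathbf{X}_{\rm ref}\|\le v\,\Delta t$, hence $0\le\psi\le (v\,\Delta t)^2$. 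This almost sure bound is what legitimises every interchange of limit, sum and expectation below.

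Given the bound, I would write $\exp(\tau\psi)=\sum_{k=0}^\infty \tau^k\psi^k/k!$ and observe that the partial sums are dominated by $\sum_{k\ge 0}|\tau|^k(v\Delta t)^{2k}/k!=\exp\!\big(|\tau|(v\Delta t)^2\big)<\infty$; dominated convergence (or Tonelli, since for $\tau\ge 0$ all terms are nonnegative) then gives $\mathbb{E}[\exp(\tau\psi)]=\sum_{k=0}^\infty (\tau^k/k!)\,\mathbb{E}[\psi^k]$ for each fixed $T$ and each real $\tau$. Because Cases~1 and 2 are mutually exclusive and exhaustive, $\{\Lambda,\Lambda'\}$ is a partition, and the law of total expectation yields $\mathbb{E}[\psi^k]=\mathbb{P}(\Lambda)\,\mathbb{E}[\psi^k\mid\Lambda]+\mathbb{P}(\Lambda')\,\mathbb{E}[\psi^k\mid\Lambda']$. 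Letting $T\to\infty$, Lemma~\ref{Lemma: Probability of Starting During Flight} gives $\mathbb{P}(\Lambda')\to p$ and hence $\mathbb{P}(\Lambda)\to 1-p$, while Lemmas~\ref{displacement_case1} and \ref{displacement_case2} supply the limiting Case~1 and Case~2 $k$-th moments, i.e.\ $\mathbb{E}[\psi^k\mid\Lambda]\to m_\Lambda(k)$ and $\mathbb{E}[\psi^k\mid\Lambda']\to m_{\Lambda'}(k)$ for $k\ge 1$; for $k=0$ one has trivially $\mathbb{E}[\psi^0]=1=m(0)$. Therefore $\mathbb{E}[\psi^k]\to m(k):=(1-p)\,m_\Lambda(k)+p\,m_{\Lambda'}(k)$.

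It remains to commute the outer limit $T\to\infty$ with the series over $k$, i.e.\ to show $\lim_{T\to\infty}\sum_{k\ge 0}(\tau^k/k!)\,\mathbb{E}[\psi^k]=\sum_{k\ge 0}(\tau^k/k!)\,m(k)$. This follows once more from the uniform (in $T$) bound $\mathbb{E}[\psi^k]\le (v\Delta t)^{2k}$ via the dominated convergence theorem for series, the summable dominating sequence being $|\tau|^k(v\Delta t)^{2k}/k!$. Chaining the three identities gives $\mathcal{M}_\psi(\tau)=\sum_{k=0}^\infty m(k)\tau^k/k!$ with $m(k)=(1-p)m_\Lambda(k)+p\,m_{\Lambda'}(k)$, which is the claim. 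The one step that genuinely needs care --- and the one I would write out most carefully --- is the $T\to\infty$ passage inside the conditional moments: one must argue that sampling $t_s$ uniformly on $[0,T]$ and sending $T\to\infty$ drives the residual pause/flight durations (the primed variables $S'_i,T'_i$ of Lemmas~\ref{displacement_case1}--\ref{displacement_case2}) to their stationary, length-biased distributions, so that the conditional laws of $\psi$ stabilise; the remainder is bookkeeping enabled by the bound $\psi\le(v\Delta t)^2$.
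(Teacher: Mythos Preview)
Your proposal is correct and follows the same route the paper takes: combine Lemmas~\ref{displacement_case1}--\ref{Lemma: Probability of Starting During Flight} via conditioning on $\{\Lambda,\Lambda'\}$ and expand the exponential as a power series in the moments. The paper's own proof is a single line (``follows directly from Lemmas~\ref{displacement_case1}--\ref{Lemma: Probability of Starting During Flight}''), so your write-up is strictly more detailed; in particular, your uniform bound $\psi\le (v\,\Delta t)^2$ and the dominated-convergence justifications for interchanging $\mathbb{E}$, $\sum_k$, and $\lim_{T\to\infty}$ are exactly the missing analytic glue, and your flag about the residual-time laws stabilising as $T\to\infty$ is the right place to be careful.
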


\begin{proof} 
The proof follows directly from Lemmas \ref{displacement_case1}-\ref{Lemma: Probability of Starting During Flight}.  
\end{proof}

Fig.~\ref{fig:moment} compares the empirical results of the $k$-th moment of~$\psi$, ${\tilde m\left(k\right)}$, with  the analytical results $m\left(k\right)$.  We obtain ${\tilde m\left(k\right)}$ by randomly sampling 5000 pairs of viewport positions that are $\Delta t$ apart and calculating the position displacement.
The gap  $\left|{\tilde m\left(k\right)}-m\left(k\right)\right|$   is smaller than $5\times10^{-6}$ when $\Delta t=1/6$~s and $k=4$, and is smaller than 6\% of ${\tilde m\left(k\right)}$ in other cases. 
Theorem~\ref{thm:displacement} will be used to calculate the ViS in \S\ref{sec:visibilitysimilarity}. 

\subsection{Correlations between Orientations and Positions}
\label{sec:correlation}

VR viewports' position and orientation are correlated. Observing the azimuth angles $\phi$ and the walking directions (i.e.,
the included angle between $\overrightarrow{{\bf X}_{\rm ref}{\bf X}_{\rm nov}}$ and the positive direction of X-axis)  in our collected data, we find that the azimuth angles fixate around the walking direction. Similar observations have been made about human walking patterns in non-virtual worlds~\cite{turano2001direction,hollands2002look}. Supported by the pose data, we assume that the azimuth angle at observation start time $t_s$ is the same as the walking direction.

\section{Visibility Similarity}
\label{sec:visibilitysimilarity}
{ We introduce the model for the average visibility similarity given the viewport-to-content distance $d$ ($\ViS\left(d\right)$) in \S\ref{Visibility Similarity Model}.} Then we apply the developed VR pose model to analyze the $\ViS\left(d\right)$ in \S\ref{VisibilitySimilarityAnalysis}, and propose the  ViS-based VR content splitting algorithm, ALG-ViS, in \S\ref{spliting}.

\subsection{ViS Model}
\label{Visibility Similarity Model}
 The analytical model for $\ViS\left(d\right)$ we develop in this section characterizes the statistical average of 
the inter-frame pixel 
similarity over different pose changes given the viewport-to-content distance $d$. 
We define the $\ViS\left(d\right)$ formally 
after introducing the camera model to present how viewport pose determines the rendered pixels. 

\textbf{Camera model.}
In VR, the virtual environment is constructed as computer-generated 3D 
contents, where the pixels in VR frames are generated by capturing the scenes with the camera. 
The camera is modeled as  a standard pinhole camera  following~\cite{Tsai87}.
A 3D point in the virtual environment is projected through the pinhole to a pixel on the VR frame. We denote the camera's angle of view (AoV) as $w_{fv}$. 
{ In addition, we assume that the far plane $d_{fp}$ of the camera,  i.e., the largest viewport-to-content distance beyond which the contents cannot be rendered in the VR frame, is much larger than the viewport position change.}

{Consider two VR frames generated at $t_s$ and $t_s+\Delta t$, 
called reference and novel frames. 
The cameras that capture these 
frames are called reference and novel cameras, respectively. A pixel in the reference frame is projected back to the 3D point in the virtual environment, and the 3D point is projected to the corresponding pixel in the novel frame. The viewport-to-content distance $d$ is formally defined as the distance between the viewport position of the reference frame and the 3D point.  The  $\ViS\left(d\right)$ represents the average similarity of the pixels (of distance $d$) in the reference frame and their corresponding pixels in the novel frame, where the average is taken over different pose changes of reference and novel frames.}

\begin{definition}[$\ViS\left(d\right)$]
\label{def:vis} {Let $\mathcal{S}_{d}$ denote the number of  the reference frame's pixels that are projected to the 3D points with distance $d$ from the reference camera. $d$ is expressed as $d=\sqrt {{{\left( {x_{\rm{3D}} - x_r} \right)}^2} + {{\left( {z_{\rm{3D}} - z_r} \right)}^2}} $, where ${\bf{X}}_{\rm{3D}}=\left(x_{\rm{3D}},z_{\rm{3D}}\right)$, ${\bf{X}}_{\rm{ref}}=\left(x_r,z_r\right)$ are the positions of the 3D point and the reference camera  in the  $XZ$-plane, respectively.   Let $\mathcal{S}_{ViS,d}$, $\mathcal{S}_{ViS,d}\subseteq \mathcal{S}_{d}$, denote the set of the reference frame's pixels that have the same pixel value as the corresponding pixels in the novel frame. 
 The visibility similarity given $d$ is defined as ${\ViS}\left(d\right)\triangleq \mathbb{E}\left[\frac{{\left| {{\mathcal{S}_{ViS,d}}} \right|}}{{\left| {{\mathcal{S}_d}} \right|}}\right]$, which is the average percentage of pixels that  have the same pixel value as the corresponding pixels in the novel frame among ${{\mathcal{S}_d}} $. The mean of $\frac{{\left| {{\mathcal{S}_{ViS,d}}} \right|}}{{\left| {{\mathcal{S}_d}} \right|}}$ is taken over the possible viewport pose changes from reference to novel cameras modeled in \S\ref{sec:pose}. }
 \end{definition}
 
 \begin{wrapfigure}{R}{0.28\textwidth}
  \centering
     \vspace{-0.3cm}
  \includegraphics[width=0.28\textwidth]{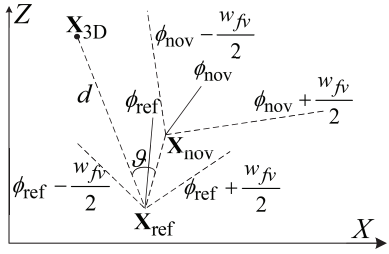}
    \vspace{-0.2cm}
    \caption{ Schematics of camera positions and angles in the $XZ$-plane.}
  \label{fig:fovterm}
  \vspace{-0.2cm}
\end{wrapfigure} 

We break down the $\ViS\left(d\right)$ into two terms: (1) \textit{the FoV term $\ViS_{fov}$}, which  represents the fraction of the VR contents contained in the FoVs of both the novel and the reference cameras, 
 and (2)  \textit{the distance term $\ViS_{dst}\left(d\right)$}, 
 which quantifies 
 the ratio of the number of pixels representing the same 3D points (of distance $d$) in the novel and reference frames. 
 In VR systems, the viewport moving closer to the VR contents will result in the use of more pixels to represent the contents. Note that $\ViS_{fov}$ is independent of $d$ while $\ViS_{dst}\left(d\right)$ is a function of $d$.
 In our analysis, we ignore the influence of occlusion in 
 $\ViS\left(d\right)$, i.e., we do not consider the case where the occluded objects are in the FoV of both reference and novel frames, are rendered in the novel frame, but occluded by the other contents in the reference frame. 
 In our numerical results, we show that their effect is small. { We have $\ViS\left(d\right)=\ViS_{fov}\ViS_{dst}\left(d\right)$. 
 The $\ViS_{fov}$ and $\ViS_{dst}\left(d\right)$ are defined formally below.}

\begin{definition}[FoV term] \label{def:fovterm}
$\ViS_{fov}$ is defined as the statistical average of the multiplication of the  fraction of overlapping polar and azimuth angles of reference and novel cameras. 
\end{definition}

We analyze $\ViS_{fov}$ with the distributions of $\Delta \theta=\theta_{\rm{nov}}-\theta_{\rm{ref}}$ and  $\Delta \phi=\phi_{\rm{nov}}-\phi_{\rm{ref}}$ obtained in \S\ref{sub:orientation}.  Fig.~\ref{fig:fovterm} depicts the positions and angles of reference and novel camera viewports in the $XZ$-plane. Since the far plane $d_{fp}$ is much larger than  $\left\|\overrightarrow{\bf{X}_\text{ref} \bf{X}_\text{nov}}\right\|$, the fraction of the overlapping azimuth angles 
is $\frac{{\min \left( {{\phi _{\rm{ref}}} + \frac{{{w _{fv}}}}{2},{\phi _{\rm{nov}}} + \frac{{{w _{fv}}}}{2}} \right) - \max \left( {{\phi _{\rm{ref}}} - \frac{{{w _{fv}}}}{2},{\phi _{\rm{nov}}} - \frac{{{w _{fv}}}}{2}} \right)}}{{{w _{fv}}}}$. The fraction of the overlapping polar angles is obtained similarly.
Hence, the FoV term is 
\begin{equation}
\label{eq:fov_def}
\begin{aligned}
\ViS_{fov} = \mathbb{E}_{\Delta\theta}\left[1 - \frac{|\Delta\theta|}{w_{fv}}\right] \cdot \mathbb{E}_{\Delta\phi}\left[1 - \frac{|\Delta\phi|}{w_{fv}}\right].
 \end{aligned}
 \end{equation}

\begin{definition}[Distance term] 
\label{distancedefinition}
 $\ViS_{dst}\left(d\right)$ is defined as
\begin{equation}
\label{def:p_d_def}
\ViS_{dst}\left(d\right)\triangleq \mathbb{E}_{\psi,\vartheta }\left[ {{{\left( {\frac{{\sqrt {{d^2} + \psi  - 2d\sqrt \psi  \cos \left( \vartheta  \right)} }}{d}} \right)}^2}} \right]
\end{equation}
where ${\sqrt {{d^2} + \psi  - 2d\sqrt \psi  \cos \left( \vartheta  \right)} }$ is the distance between the VR content and the  novel camera, and $\vartheta$ is the included angle between $\overrightarrow{\bf{X}_{\rm{ref}}{\bf{X}}_{\rm{3D}}}$ and $\overrightarrow{\bf{X}_{\rm{ref}}\bf{X}_{\rm{nov}}}$, as shown in Fig.~\ref{fig:fovterm}. 

 \end{definition}

\subsection{The Analysis of $\ViS\left(d\right)$} \label{VisibilitySimilarityAnalysis}

We first provide a closed-form expression for $\ViS_{ fov}$, and then approximate $\ViS_{dst}\left(d\right)$ tightly with a small error. 

\begin{theorem}[FoV term]
\label{prop:fovterm}
$\ViS_{fov}$ is equal to
\begin{equation*}
\label{eq:fov}
\ViS_{fov}=\left( 1 - \frac{{{b_{l,\theta }} - \exp \left( { - \frac{{{w _{fv}}}}{{{b_{l,\theta }}}}} \right)\left( {{b_{l,\theta }} + {w _{fv}}} \right)}}{{{w _{fv}}}}\right) \left( {1 - p_f^\phi } \right)
\end{equation*}
where ${{b_{l,\theta }}}$ is the scale of the fitted Laplace distribution of $\Delta\theta$, and $p_f^\phi  = \frac{2}{{{w _{fv}}}}\int_0^{{w _{fv}}} {\Delta \phi {p_{\Delta \phi }}\left( {\Delta \phi } \right)d} \left( {\Delta \phi } \right)$, where ${p_{\Delta \phi }}\left(\Delta \phi\right)$ is the PDF of $\Delta\phi$. 
$p_f^{\phi} $ is equal to
\begin{equation}
\label{eq:fov2}
{p_{f}^{ \phi }}=\left\{ \begin{array}{l}
\frac{{{b_l} - \exp \left( { - \frac{{{w _{fv}}}}{{{b_l}}}} \right)\left( {{b_l} + {w _{fv}}} \right)}}{{{w _{fv}}}},\ \Delta t < \beta_1\\
\frac{{2\left( {1 - {p_l}} \right)}}{{{w_{fv}}}}\frac{{{w_{fv}} + {b_{lo}}\log \left( {\frac{2}{{1 + {e^{ - \frac{{{w_{fv}}}}{{{b_{lo}}}}}}}}} \right) - \frac{{{w_{fv}}}}{{1 + {e^{ - \frac{{{w_{fv}}}}{{{b_{lo}}}}}}}}}}{{2{{\left( {1 + {e^{ - \frac{\pi }{{{b_{lo}}}}}}} \right)}^{ - 1}} - 1}} \\
+{p_l}\frac{{{b_l} - \exp \left( { - \frac{{{w _{fv}}}}{{{b_l}}}} \right)\left( {{b_l} + {w _{fv}}} \right)}}{{{w _{fv}}\left( {1 - \exp \left( { - \frac{\pi }{{{b_l}}}} \right)} \right)}},\ \beta_1 \leqslant \Delta t < \beta_2\\
{\frac{{{w_{fv}}}}{{2\pi }}},\ \Delta t > \beta_2.
\end{array} \right.
\end{equation}
\end{theorem}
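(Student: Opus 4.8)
The plan is to start from Definition~\ref{def:fovterm} and \eqref{eq:fov_def} and use the assumed independence of $p_{\Delta\theta}$ and $p_{\Delta\phi}$ to write $\ViS_{fov}$ as the product of a \emph{polar factor} $\mathbb{E}_{\Delta\theta}[\,1-|\Delta\theta|/w_{fv}\,]$ and an \emph{azimuth factor} $\mathbb{E}_{\Delta\phi}[\,1-|\Delta\phi|/w_{fv}\,]$, and to evaluate each in closed form. By linearity each factor equals $1-\mathbb{E}[\,|\Delta\cdot|\,]/w_{fv}$, and since the increment PDFs established in \S\ref{sub:orientation} are even about $0$ and, over the $\Delta t$ of interest, essentially supported in $[-w_{fv},w_{fv}]$, I would rewrite $\mathbb{E}[\,|\Delta\cdot|\,]/w_{fv}$ as $\frac{2}{w_{fv}}\int_0^{w_{fv}} u\,p(u)\,du$; for the azimuth factor this is exactly the definition of $p_f^\phi$, and the polar factor has the analogous form. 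The theorem then reduces to evaluating a handful of elementary one-dimensional integrals. (Replacing the true overlap fraction $(1-|\Delta\cdot|/w_{fv})^+$ by $1-|\Delta\cdot|/w_{fv}$ without a positive-part clamp is the modeling approximation already present in \eqref{eq:fov_def}; it is harmless here because the measured $|\Delta\theta|$, and $|\Delta\phi|$ for small $\Delta t$, are small compared with $w_{fv}$, and I would state this as the operating assumption.)

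For the polar factor I would substitute the fitted zero-mean Laplace density $p_{\Delta\theta}(u)=\frac{1}{2b_{l,\theta}}e^{-|u|/b_{l,\theta}}$, so the factor becomes $1-\frac{1}{w_{fv}b_{l,\theta}}\int_0^{w_{fv}} u\,e^{-u/b_{l,\theta}}\,du$; one integration by parts evaluates the integral to $b_{l,\theta}^2-e^{-w_{fv}/b_{l,\theta}}(b_{l,\theta}^2+b_{l,\theta}w_{fv})$, and collecting terms produces the first parenthesis of the claimed identity. For the azimuth factor $1-p_f^\phi$ I would split on $\Delta t$ as in \S\ref{sub:orientation}. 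When $\Delta t<\beta_1$, $p_{\Delta\phi}$ is a zero-mean Laplace of scale $b_l$, so $p_f^\phi$ is the same integral as above, giving the first branch of \eqref{eq:fov2}. When $\Delta t>\beta_2$, $p_{\Delta\phi}\equiv 1/(2\pi)$ on $[-\pi,\pi)$ and $\frac{2}{w_{fv}}\int_0^{w_{fv}}u\cdot\frac{1}{2\pi}\,du=\frac{w_{fv}}{2\pi}$, the last branch. For the middle regime $\beta_1\le\Delta t<\beta_2$, where $p_{\Delta\phi}=f_{\rm mixed}$, I would split by linearity into its Laplace and logistic components: the Laplace component reproduces the earlier computation divided by its truncation constant, and the logistic component requires $\int_0^{w_{fv}} u\,\frac{e^{-u/b_{lo}}}{b_{lo}(1+e^{-u/b_{lo}})^2}\,du$, which I would evaluate by noting the integrand is $u$ times the derivative of the sigmoid $(1+e^{-u/b_{lo}})^{-1}$ and integrating by parts; the residual $\int_0^{w_{fv}}(1+e^{-u/b_{lo}})^{-1}\,du$ is elementary and supplies the $b_{lo}\log\!\big(2/(1+e^{-w_{fv}/b_{lo}})\big)$ term. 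Dividing by the normalizers of $f_{\rm mixed}$ and adding the two components gives the middle branch of \eqref{eq:fov2}, and multiplying the polar factor by $1-p_f^\phi$ completes the proof.

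I expect the only genuine obstacle to be this middle (mixture) case: one must integrate $u$ against the \emph{folded and truncated} logistic and Laplace densities, with their shift parameters, and keep the normalization constants consistent with the exact form of $f_{\rm mixed}$ stated earlier. Everything else --- the polar factor, the small-$\Delta t$ Laplace branch, and the large-$\Delta t$ uniform branch --- is routine single-variable calculus (integration by parts of $\int u\,e^{-u/b}\,du$ and of $\int u\,d(\text{sigmoid})$), so the bulk of the effort is bookkeeping rather than a conceptual difficulty.
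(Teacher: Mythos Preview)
Your proposal is correct and follows essentially the same route as the paper: factor $\ViS_{fov}$ via \eqref{eq:fov_def} into a polar and an azimuth contribution, rewrite each as $1-\tfrac{2}{w_{fv}}\int_0^{w_{fv}} u\,p(u)\,du$ using the evenness of the increment densities, and then substitute the Laplace, mixed, or uniform PDF from \S\ref{sub:orientation} and integrate. The paper's appendix does exactly this but suppresses the calculus (``Substituting $p_{\Delta\phi}$ \ldots we obtain''), so your integration-by-parts details for the Laplace and logistic pieces are precisely the bookkeeping the paper omits; note that the stated middle branch of \eqref{eq:fov2} takes the mixture shifts $\mu_l,\mu_{lo}$ to be $0$, which you should adopt to land on the displayed formula.
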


\begin{proof}[Proof]
See Appendix~\ref{proof_fov}.
\end{proof}

We focus on small $\Delta t$ (e.g., $\Delta t\leqslant 1$~s which belongs to the first case in \eqref{eq:fov2}), as the 
most relevant, in practice, to exploiting VR frame correlation. In this case, the FoV term only depends on  $w_{fv}$,  $b_{l}$, and $b_{l,\theta}$.

\begin{theorem}[Distance term]
\label{distance_term}
For  $\varepsilon>0$, we can approximate $\ViS_{dst}\left(d\right)$ within 
error of $\kappa 
\varepsilon {\mathcal{M}_\psi }\left( { - \frac{1}{{{\varepsilon ^2}}}} \right) 
 $, where $\kappa=\frac{{4\sin \left( {\frac{{{w_{fv}}}}{2}} \right)}}{{{w_{fv}}d\sqrt \pi  }}$. Specifically, $\ViS_{dst}\left(d\right)$ can be approximated according to
\begin{equation}
\begin{aligned}
\label{eq:p_d}
\left| {\ViS_{dst}}\left(d\right)-1-\frac{{m\left( 1 \right)}}{{{d^2}}}+\kappa \sum\limits_{i = 0}^\infty  {g\left( i \right)} \right|
< \kappa \varepsilon 
{\mathcal{M}_\psi }\left( { - \frac{1}{{{\varepsilon ^2}}}} \right)
\end{aligned}
\end{equation}
where $g\left( i \right) = \frac{{{{\left( { - 1} \right)}^i}{\varepsilon ^{ - \left( {2i + 1} \right)}}m\left( {i + 1} \right)}}{{\left( {i + \frac{1}{2}} \right)i!}}$.
\end{theorem}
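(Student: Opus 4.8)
The plan is to turn \eqref{eq:p_d} into an elementary tail estimate for an incomplete‑gamma integral. First I would expand the integrand of Definition~\ref{distancedefinition}: since
\[
\Bigl(\tfrac{\sqrt{d^{2}+\psi-2d\sqrt\psi\cos\vartheta}}{d}\Bigr)^{2}=1+\frac{\psi}{d^{2}}-\frac{2\sqrt\psi\cos\vartheta}{d},
\]
taking $\mathbb{E}_{\psi,\vartheta}$ and using $m(1)=\mathbb{E}[\psi]$ gives the \emph{exact} identity $\ViS_{dst}(d)=1+m(1)/d^{2}-(2/d)\,\mathbb{E}[\sqrt\psi\cos\vartheta]$. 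By the correlation model of \S\ref{sec:correlation} the reference azimuth coincides with the walking direction, so $\vartheta$ — the angular offset of the rendered content inside the reference FoV — is uniform on $[-w_{fv}/2,\,w_{fv}/2]$ and independent of the displacement magnitude $\sqrt\psi$; hence $\mathbb{E}[\cos\vartheta]=\tfrac{2}{w_{fv}}\sin\tfrac{w_{fv}}{2}$ and (by the definition of $\kappa$) $\tfrac{2}{d}\,\mathbb{E}[\cos\vartheta]=\sqrt\pi\,\kappa$, so $\ViS_{dst}(d)=1+m(1)/d^{2}-\sqrt\pi\,\kappa\,\mathbb{E}[\sqrt\psi]$. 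Comparing with \eqref{eq:p_d}, the theorem reduces to the two‑sided estimate $0\le\sqrt\pi\,\mathbb{E}[\sqrt\psi]-\sum_{i=0}^{\infty}g(i)<\varepsilon\,\mathcal{M}_{\psi}(-1/\varepsilon^{2})$.

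Second, I would evaluate $\sum_{i\ge0}g(i)$ in closed form. Writing $\varepsilon^{-(2i+1)}=\varepsilon^{-1}(\varepsilon^{-2})^{i}$ and $1/(i+\tfrac12)=\int_{0}^{1}s^{i-1/2}\,ds$, then interchanging the sum, the $s$‑integral and the expectation — legitimate because $\psi\le(v\Delta t)^{2}$ almost surely, so every moment $m(i+1)=\mathbb{E}[\psi^{i+1}]$ and every series occurring is geometrically dominated — and using $\mathbb{E}\bigl[\psi\,e^{-\psi s/\varepsilon^{2}}\bigr]=\sum_{j\ge0}(-s/\varepsilon^{2})^{j}m(j+1)/j!$, I obtain
\[
\sum_{i=0}^{\infty}g(i)=\varepsilon^{-1}\,\mathbb{E}\!\left[\psi\int_{0}^{1}s^{-1/2}e^{-\psi s/\varepsilon^{2}}\,ds\right].
\]
The substitution $u=\psi s/\varepsilon^{2}$ turns the inner integral into $(\psi/\varepsilon^{2})^{-1/2}\gamma(\tfrac12,\psi/\varepsilon^{2})$, with $\gamma$ the lower incomplete gamma function; the prefactors collapse ($\varepsilon^{-1}\cdot\varepsilon=1$), giving the clean identity $\sum_{i=0}^{\infty}g(i)=\mathbb{E}\bigl[\sqrt\psi\,\gamma(\tfrac12,\psi/\varepsilon^{2})\bigr]$. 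Letting $\varepsilon\to0$ here already recovers $\sqrt\pi\,\mathbb{E}[\sqrt\psi]$, confirming this is the right target.

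Third, since $\Gamma(\tfrac12)=\sqrt\pi$ and $\gamma(\tfrac12,x)=\Gamma(\tfrac12)-\Gamma(\tfrac12,x)$ with $\Gamma(\tfrac12,x)=\int_{x}^{\infty}t^{-1/2}e^{-t}\,dt$ the upper incomplete gamma function, the residual is $\sqrt\pi\,\mathbb{E}[\sqrt\psi]-\sum_{i}g(i)=\mathbb{E}\bigl[\sqrt\psi\,\Gamma(\tfrac12,\psi/\varepsilon^{2})\bigr]\ge0$. I would bound $\Gamma(\tfrac12,x)=\int_{x}^{\infty}t^{-1/2}e^{-t}\,dt<x^{-1/2}\int_{x}^{\infty}e^{-t}\,dt=x^{-1/2}e^{-x}$ (strict because $t^{-1/2}<x^{-1/2}$ on $\{t>x\}$), so pointwise $\sqrt\psi\,\Gamma(\tfrac12,\psi/\varepsilon^{2})<\sqrt\psi\,(\psi/\varepsilon^{2})^{-1/2}e^{-\psi/\varepsilon^{2}}=\varepsilon\,e^{-\psi/\varepsilon^{2}}$ (and $\le\varepsilon$ trivially when $\psi=0$). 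Taking expectations gives $\mathbb{E}\bigl[\sqrt\psi\,\Gamma(\tfrac12,\psi/\varepsilon^{2})\bigr]<\varepsilon\,\mathbb{E}[e^{-\psi/\varepsilon^{2}}]=\varepsilon\,\mathcal{M}_{\psi}(-1/\varepsilon^{2})$, with $\mathcal{M}_{\psi}$ the MGF from Theorem~\ref{thm:displacement}. Multiplying this residual bound by $\kappa>0$ and substituting back into the exact expression for $\ViS_{dst}(d)$ yields \eqref{eq:p_d}.

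The step I expect to be most delicate is the closed‑form evaluation of $\sum_{i}g(i)$: matching the combinatorial prefactors $(-1)^{i}\varepsilon^{-(2i+1)}/((i+\tfrac12)i!)$ to the incomplete‑gamma form hinges on the precise bookkeeping above and on a careful justification of the triple interchange (monotone/dominated convergence, using boundedness of $\psi$). Once $\sum_{i}g(i)=\mathbb{E}[\sqrt\psi\,\gamma(\tfrac12,\psi/\varepsilon^{2})]$ is established, the error estimate is only the monotonicity bound $\Gamma(\tfrac12,x)<x^{-1/2}e^{-x}$; a secondary point to handle cleanly is deriving the law of $\vartheta$ (uniform on $[-w_{fv}/2,w_{fv}/2]$, independent of $\psi$) from the FoV geometry and the correlation assumption of \S\ref{sec:correlation}.
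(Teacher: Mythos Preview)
Your proposal is correct and is essentially the paper's own argument: both average over $\vartheta$ (uniform on $[-w_{fv}/2,w_{fv}/2]$ by \S\ref{sec:correlation}) to obtain $\ViS_{dst}(d)=1+m(1)/d^{2}-\sqrt\pi\,\kappa\,\mathbb{E}[\sqrt\psi]$, then use the integral representation $\sqrt\psi=\pi^{-1/2}\int_{0}^{\infty}\psi\,e^{-\tau\psi}\tau^{-1/2}\,d\tau$ truncated at $\tau=1/\varepsilon^{2}$ (your change of variable $\tau=s/\varepsilon^{2}$ makes this the same integral), expand the truncated piece termwise to recover $\sum_{i}g(i)$, and bound the tail via $\tau^{-1/2}<\varepsilon$, which is precisely your estimate $\Gamma(\tfrac12,x)<x^{-1/2}e^{-x}$. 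The only difference is cosmetic: you package the truncated integral as the lower incomplete gamma $\gamma(\tfrac12,\psi/\varepsilon^{2})$ and invoke boundedness $\psi\le (v\Delta t)^{2}$ to justify the interchanges, whereas the paper leaves these implicit.
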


\begin{proof}
{See Appendix~\ref{p_d_proof}.  Proof sketch:  Taking the average over $\vartheta$ and substituting $m\left(1\right)$,} \eqref{def:p_d_def} is rewritten as $\ViS_{dst}\left(d\right)=1 + \frac{m\left(1\right)}{{{d^2}}} - \frac{{4\sin \left( {\frac{{{w_{fv}}}}{2}} \right)}}{{w_{fv}}d  } {\mathbb{E}_\psi }\left[ \sqrt{\psi}  \right]$. Since ${\mathbb{E}_\psi }\left[ \sqrt{\psi}  \right]=\frac{1}{{\sqrt \pi  }}\mathbb{E}_{\psi}\left[ {\int_0^{\infty} {\psi {e^{ - \tau \psi }}{\tau ^{ - \frac{1}{2}}}} d\tau } \right]$, we then prove that $\mathbb{E}_{\psi}\left[ {\int_0^{\frac{1}{{{\varepsilon ^2}}}} {\psi {e^{ - \tau \psi }}{\tau ^{ - \frac{1}{2}}}} d\tau } \right]=\sum\limits_{i = 0}^\infty  { g\left(i\right)} $ and that the approximation error  $\mathbb{E}_{\psi}\left[ {\int_{\frac{1}{{{\varepsilon ^2}}}}^\infty  {\psi{e^{ - \tau \psi}}{\tau ^{ - \frac{1}{2}}}} d\tau } \right]$ is smaller than $\varepsilon {\mathcal{M}_\psi }\left( { - \frac{1}{{{\varepsilon ^2}}}} \right) $  to conclude the proof. 
\end{proof}

From \eqref{eq:p_d}, $\ViS_{dst}\left(d\right)$ can be approximated by $1   + \frac{m\left(1\right)}{{{d^2}}}- \kappa \sum\limits_{i = 0}^\infty  {g\left( i \right)} $. The approximation error $\kappa 
\varepsilon {\mathcal{M}_\psi }\left( { - \frac{1}{{{\varepsilon ^2}}}} \right) 
 $ can be made arbitrarily small by choosing a small $\varepsilon$.  
{ $\sum\limits_{i = 0}^\infty  {\left| {g\left( i \right)} \right|}$ is convergent since $\left| {g\left( i \right)} \right|$ is bounded by $\frac{{2{\varepsilon^{ - (2i + 1)}}{{(v\Delta t)}^{2\left(i+1\right)}}}}{{i!}}$. Hence, $\sum\limits_{i = 0}^\infty  { {g\left( i \right)} }$ is convergent,  and the sum of the first $30$ terms provides a good approximation with an error 
$\varepsilon {\mathcal{M}_\psi }\left( { - \frac{1}{{{\varepsilon ^2}}}} \right)<0.01$  in \S\ref{sub:evaluation}.}

 The time taken to calculate the $\ViS\left(d\right)$ is dominated by calling the confluent hypergeometric function ${}_1{F_1}\left(n;m;z\right) $ 
in Lemmas~\ref{displacement_case1} and \ref{displacement_case2}. 
The average time, over 100 iterations, for calculating $\ViS\left(d\right)$  on a Lenovo laptop (equipped with an AMD Ryzen 7 4800H CPU and an NVIDIA GTX 1660 Ti GPU) in MATLAB is only 4.4 ms. Hence, the $\ViS\left(d\right)$ can be calculated for each frame in real time.

\setlength{\textfloatsep}{5pt}
 \begin{algorithm}[t]
 \caption{ALG-ViS.}
 \begin{algorithmic}[1]
  \label{alg-vis}
 \STATE $count=0$;
 \FOR{each new frame}
 \IF{$\mod\left(count,R\right)==0$}
 \STATE The frame is selected as a reference frame;
 \STATE Classify all VR contents as foreground contents;
 \ELSE 
 \STATE $d_{tr}\leftarrow d_{fp}$;
  \FOR {($d = 0.5$; $d<d_{fp}$; $d\leftarrow d+0.5$)}
  \STATE Calculate $\ViS\left(d\right)$;
  \IF {$\ViS\left(d\right) \geqslant \ViS_{tr}$}
    \STATE $d_{tr}\leftarrow d$;
    \STATE break;
  \ENDIF
  \ENDFOR
  \STATE Classify the VR contents with $d\leqslant d_{tr}$ as foreground contents, and other contents as background contents;
    \STATE $count\leftarrow count+1$;
  \ENDIF
  \STATE Render the foreground contents. Reuse the background pixels from the  reference frame by view projection;
  \ENDFOR
 \end{algorithmic} 
 \end{algorithm}
 
\subsection{ViS-Based Foreground and Background Content Splitting}
\label{spliting}

{ Based on the average ViS for a given $d$, i.e., $\ViS\left(d\right)$,  we adaptively split the contents to background and foreground, where the background (with a larger $d$) has a high $\ViS\left(d\right)$ and can be reused  to reduce the resource consumption.} To this end, we propose ALG-ViS 
given in Algorithm~\ref{alg-vis}. For every $R$ consecutive frames, the first frame is selected as the reference frame; the remaining $R-1$ frames are the novel frames. In reference frames, all VR contents are classified as foreground contents. In novel frames, we calculate a distance threshold~$d_{tr}$ such that $\ViS\left(d_{tr}\right)=\ViS_{tr}$, where $\ViS_{tr}$ is the  threshold and a $\ViS\left(d\right)$ larger than $\ViS_{tr}$ indicates high similarity of pixels in reference and novel frames. The VR contents that have the distance 
$d\leqslant d_{tr}$ from the reference camera are classified as foreground contents, the other contents -- as background contents. 
VR system renders the foreground contents and reuses the pixels for the background contents from the reference frame by view projection~\cite{YangLi2019,Warp}. 
We only need to  calculate  $\ViS\left(d\right)$ for $R-1$ values of $\Delta t$ when the inter-frame interval is fixed (e.g., when the system supports the full frame rate as in \S\ref{sub:evaluation}), which makes ALG-ViS even more lightweight.

\section{Evaluation}
\label{sub:evaluation}
\begin{figure*}[hbt!]
\begin{minipage}[t]{0.47\linewidth}
\vspace{-3.2cm}
\begin{subfigure}{.495\textwidth}
  \centering
   \includegraphics[width=\linewidth]{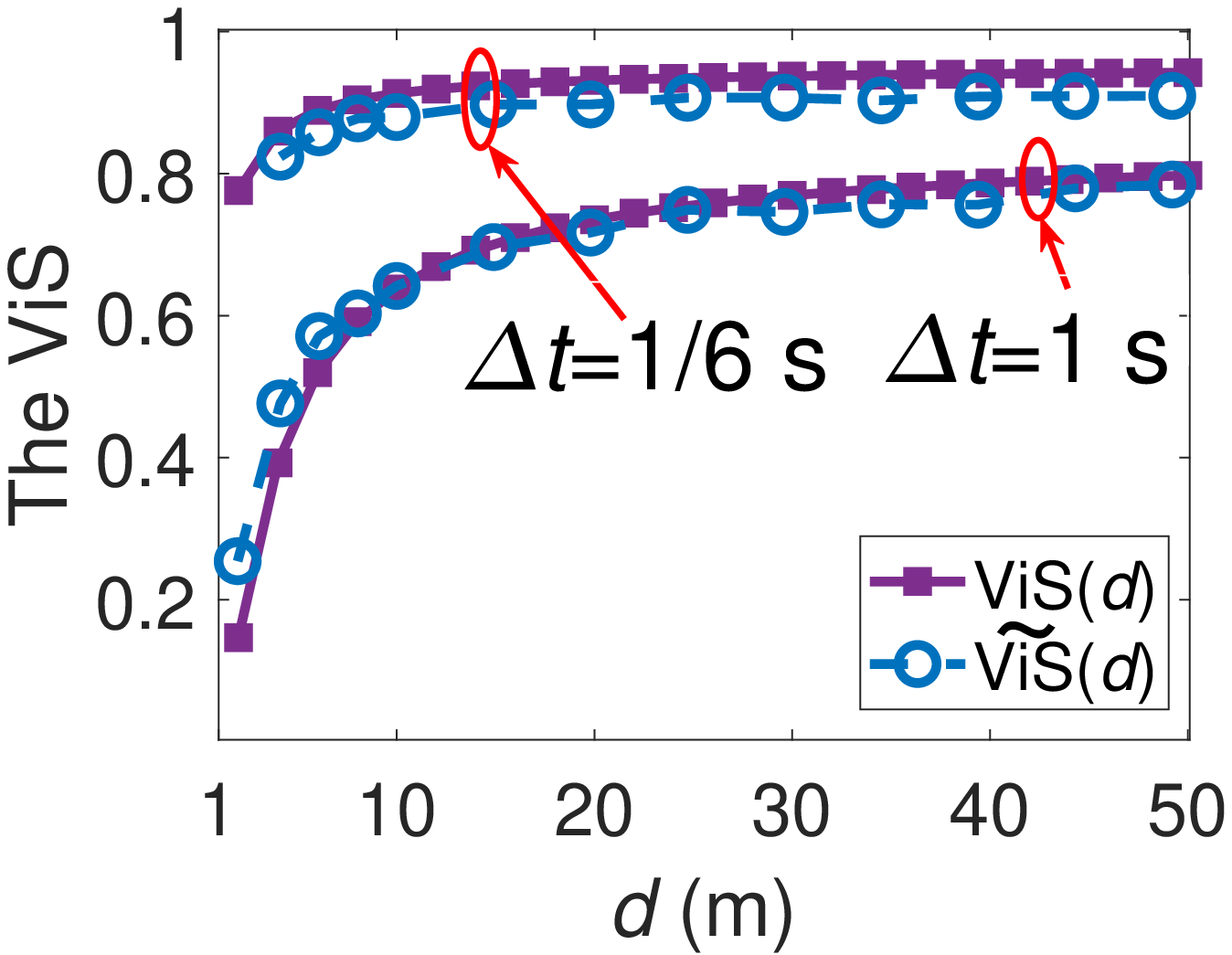}   
  \caption{Headset VR}
  \label{fig:ev_Oculus}
\end{subfigure}
\begin{subfigure}{.495\textwidth}
  \centering
   \includegraphics[width=\linewidth]{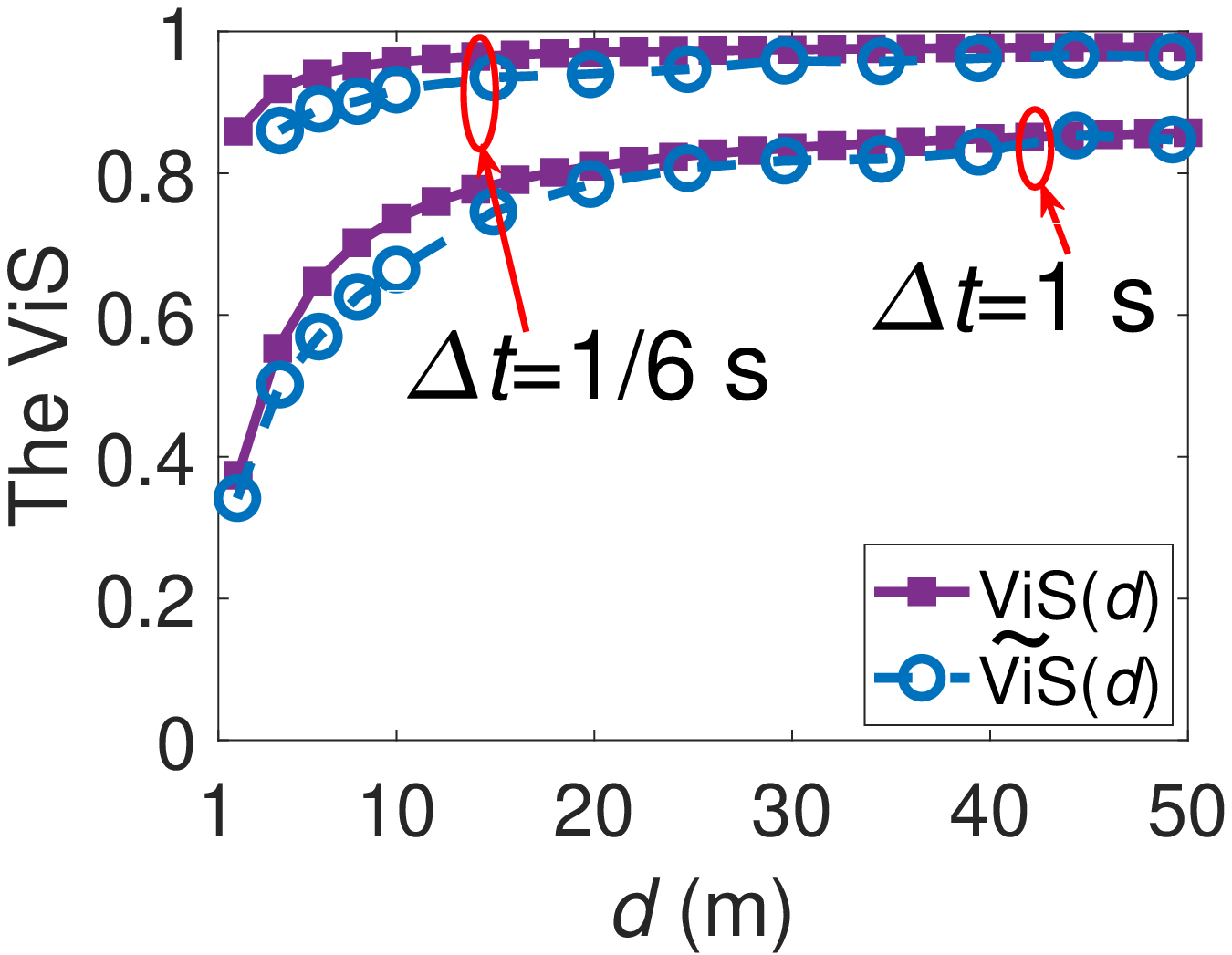}   
  \caption{Phone-based VR}
  \label{fig:ev_phone}
\end{subfigure}
\vspace{-0.15cm}
  \caption{Analytical  and simulation results of the ViS for VK game 
  in
  headset VR and phone-based VR. 
  }
     \label{ev_interface}
\end{minipage}
\hspace{0.1cm}
\begin{minipage}[t]{0.24\linewidth}
\centering
   \includegraphics[width=\textwidth]{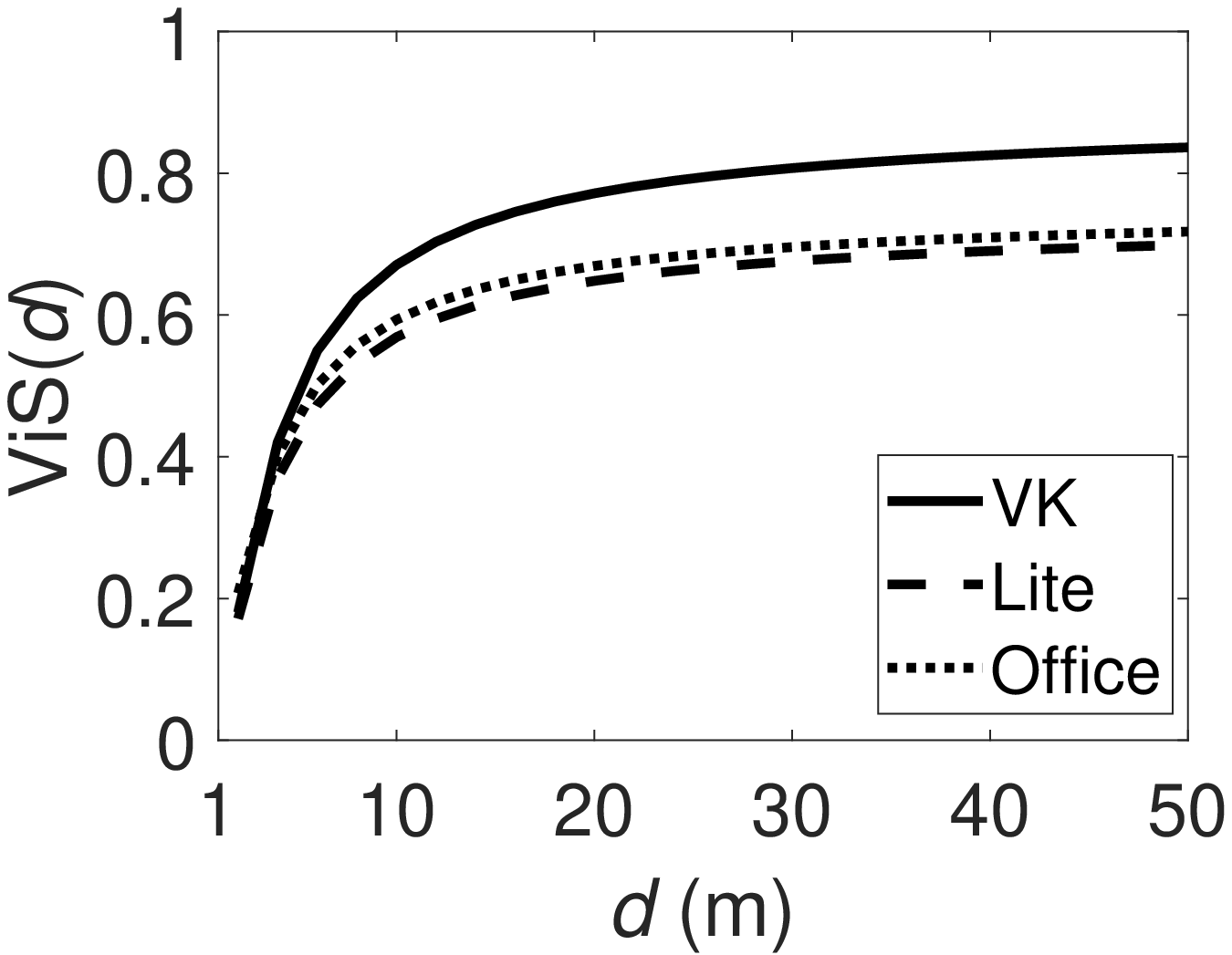}
  \caption{The analytical ViS versus $d$ for different games in desktop VR.}
  \label{ev_game}
\end{minipage}
\hspace{0.1cm}
 \begin{minipage}[t]{0.26\linewidth}
\includegraphics[width=0.9\textwidth]{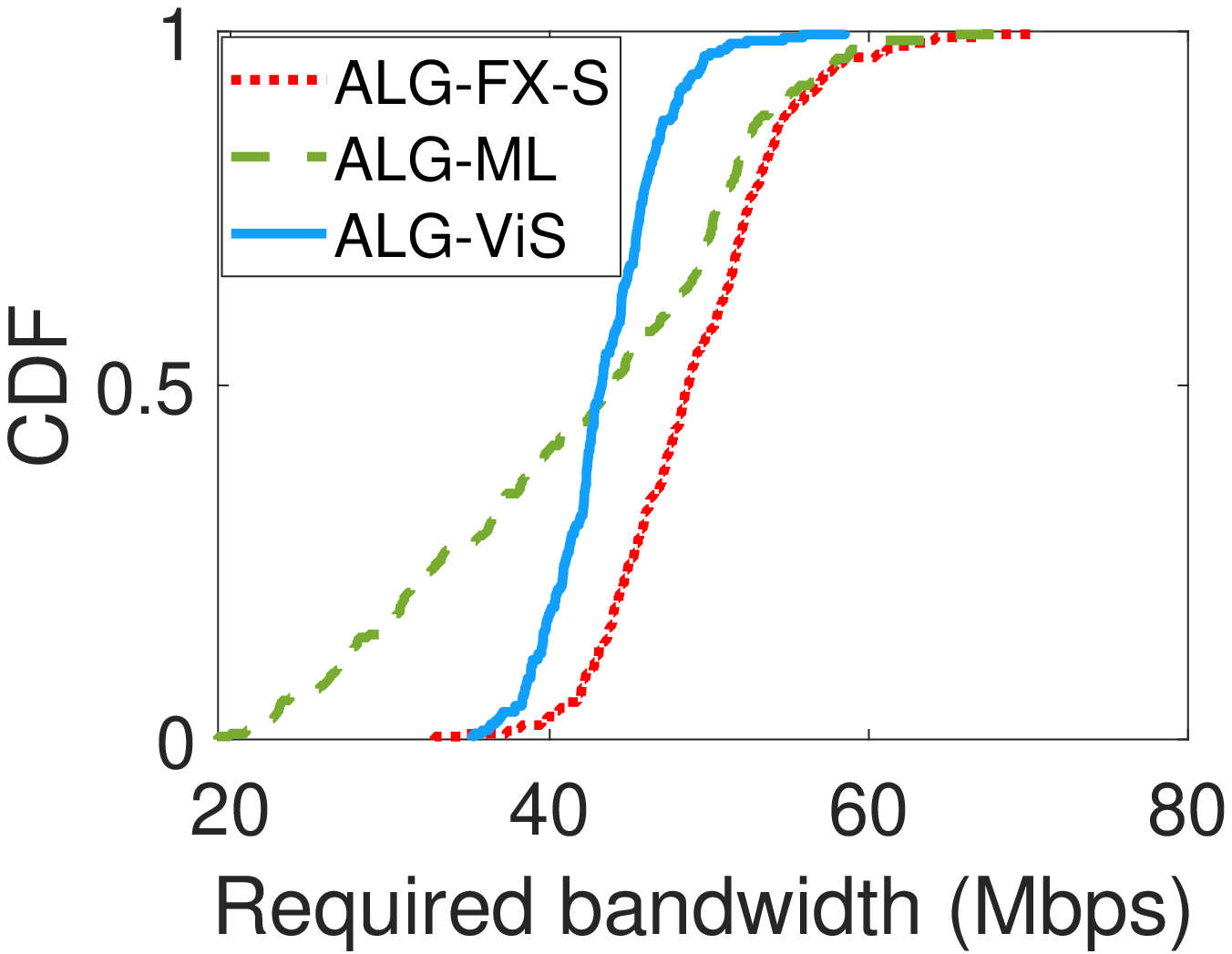}
\caption{CDF of required bandwidth  for 
ALG-ViS and the baselines
 in the edge-assisted  system.
}
\label{fig: bandwidth}
\end{minipage}
\end{figure*}

\begin{figure*}
\centering
\vspace{-0.5cm}
\begin{subfigure}{.24\textwidth}
  \centering
   \includegraphics[width=\linewidth]{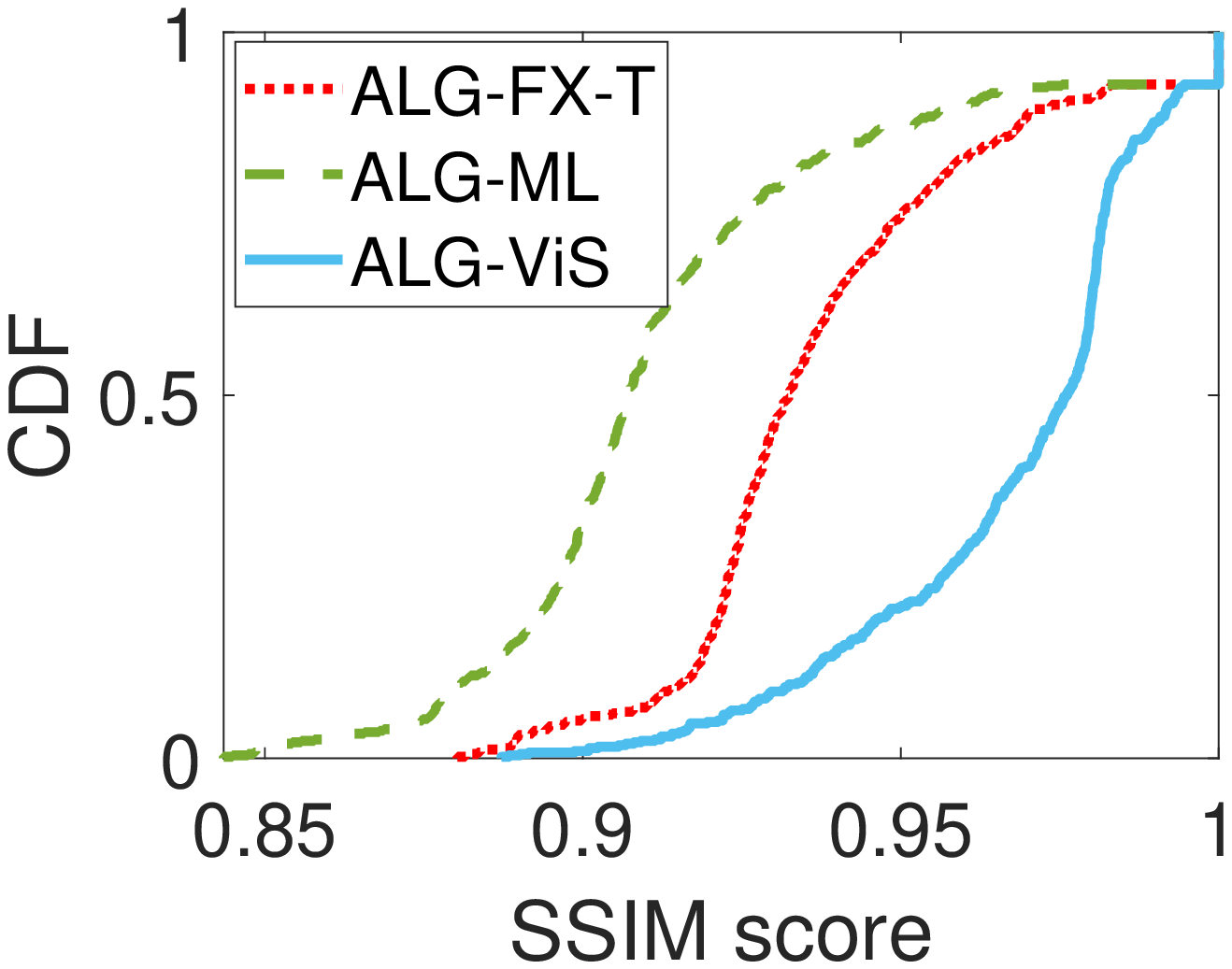}  
  \caption{VR frame quality}
  \label{fig:ssimscore}
\end{subfigure}
\begin{subfigure}{.24\textwidth}
  \centering
   \includegraphics[width=\linewidth]{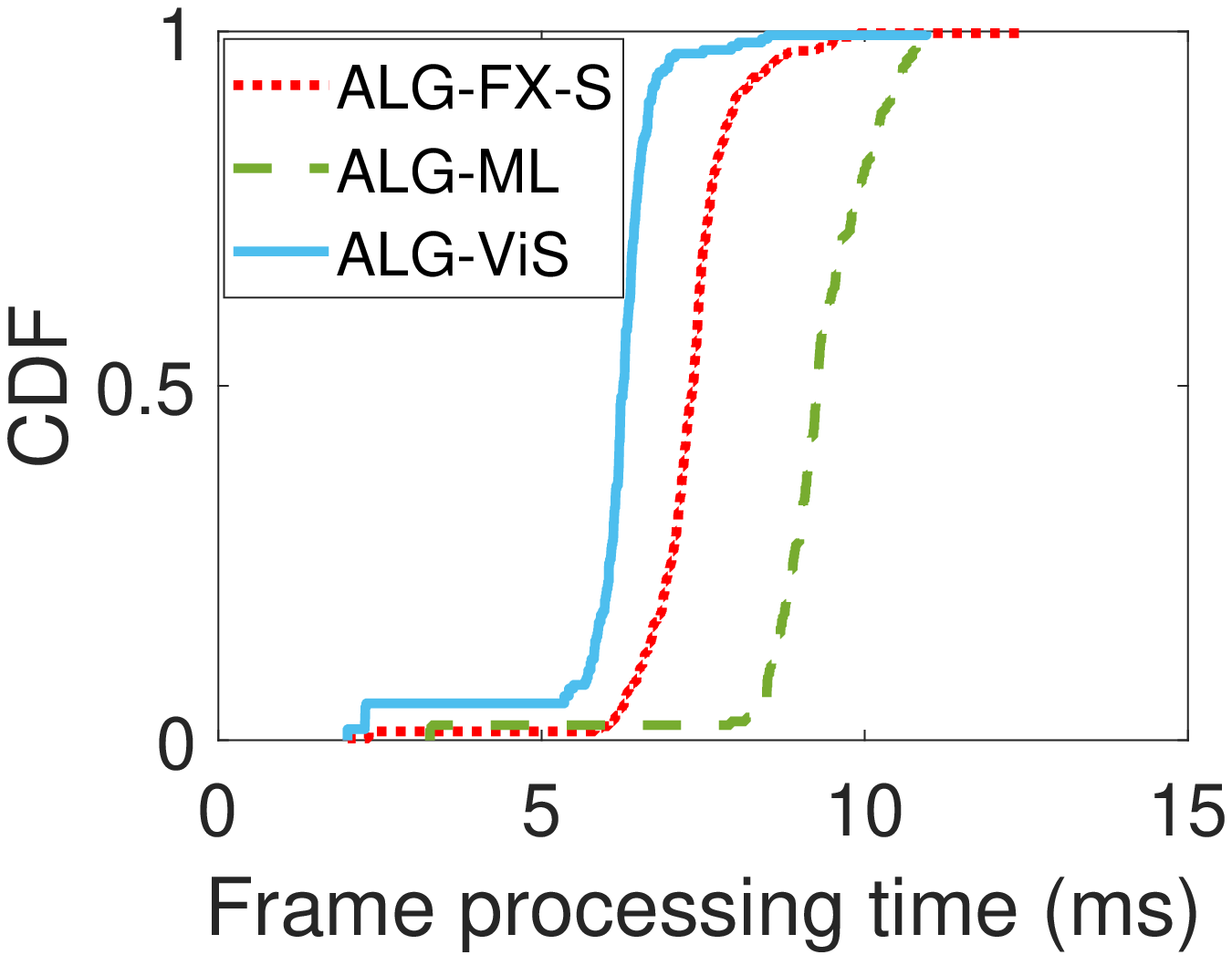}   
  \caption{Frame processing time}
  \label{fig:frame_processing_time}
\end{subfigure}
\begin{subfigure}{.24\textwidth}
  \centering
   \includegraphics[width=\linewidth]{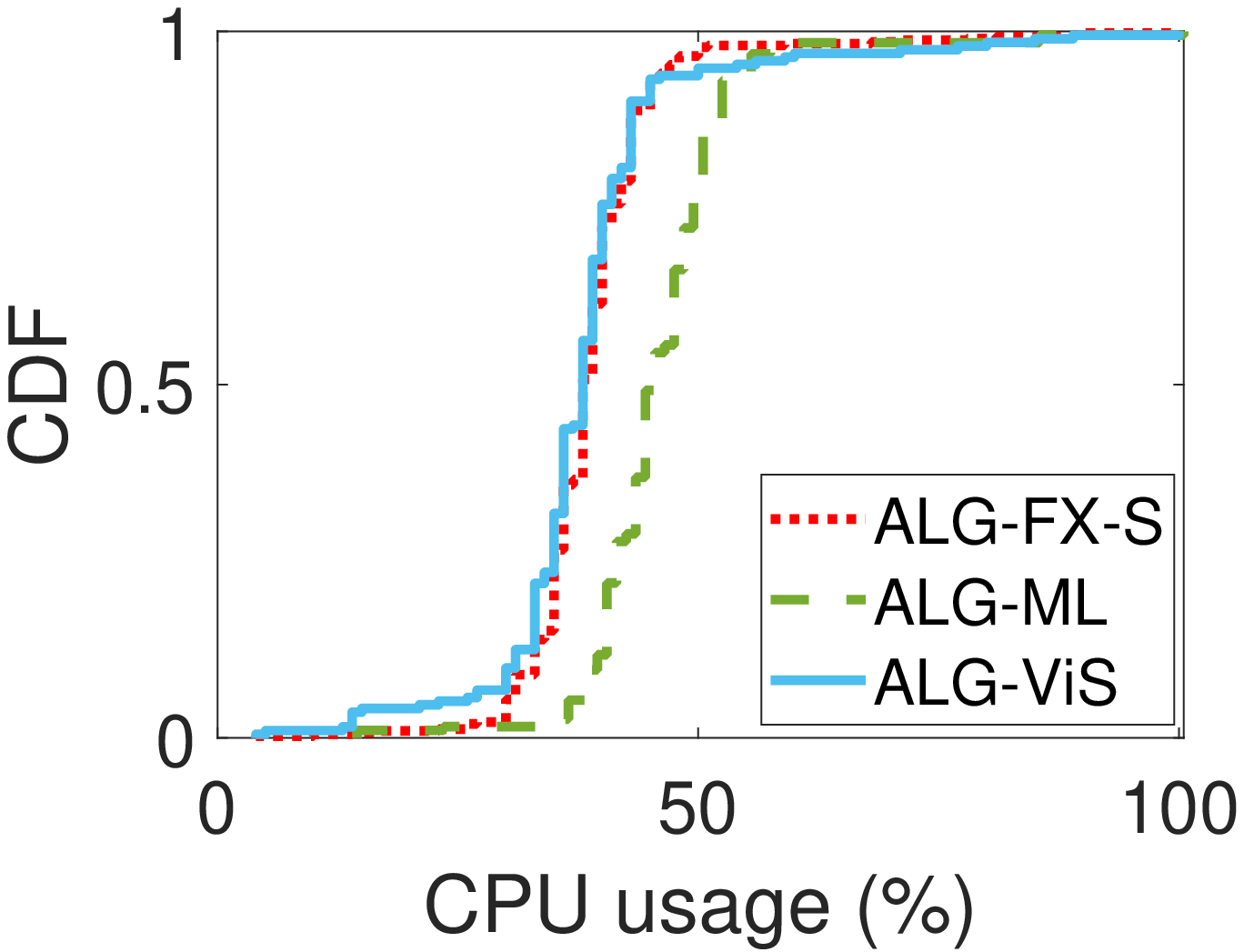}   
  \caption{CPU usage}
  \label{fig:cpu}
\end{subfigure}
\begin{subfigure}{.24\textwidth}
  \centering
   \includegraphics[width=\linewidth]{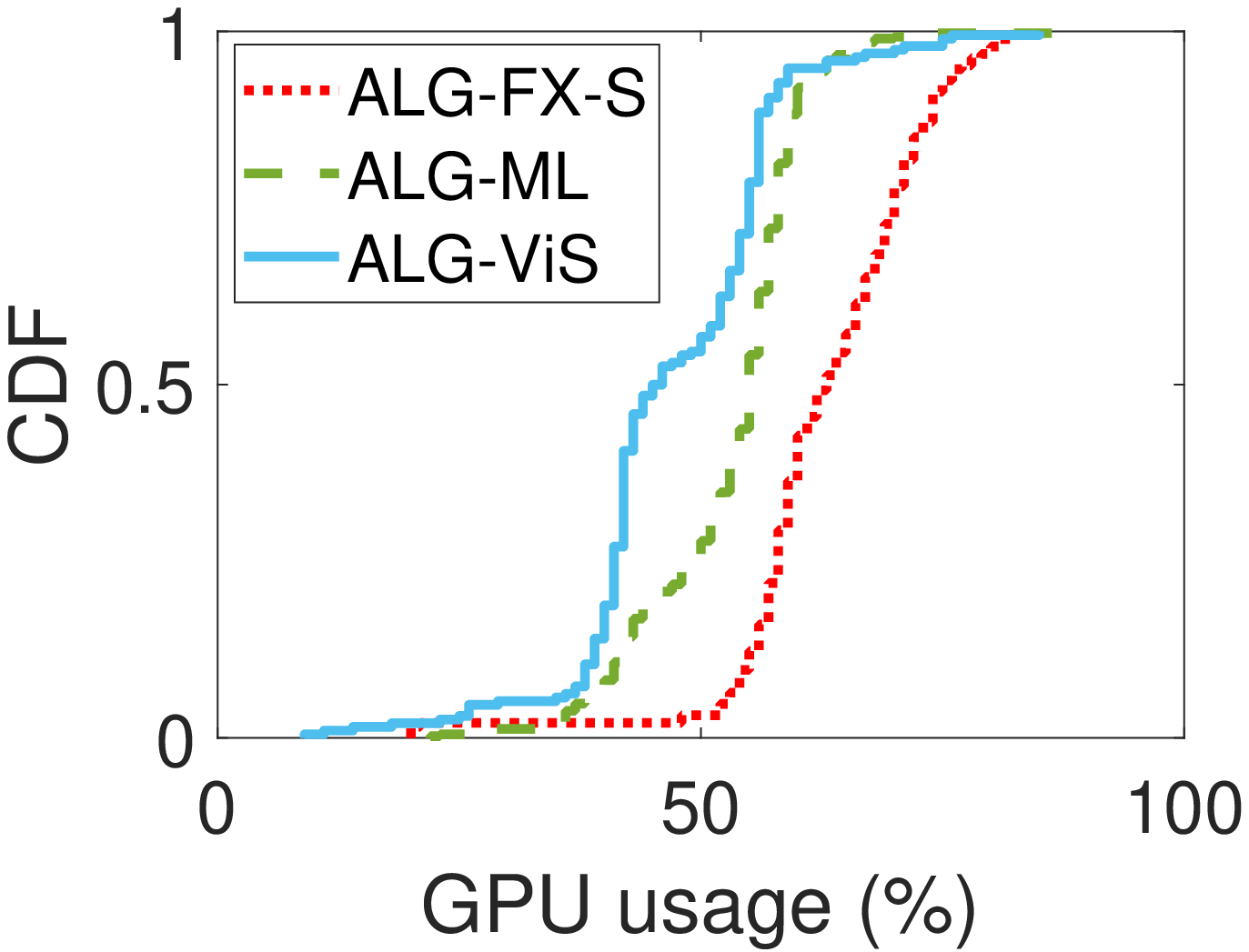}   
  \caption{GPU usage}
  \label{fig:gpu}
\end{subfigure}
  \vspace{-0.16cm}
  \caption{CDFs of 4 different metrics for ALG-ViS and the baselines in the local rendering system. 
  }
  \vspace{-0.65cm}
     \label{ev_vis_alg}
\end{figure*}

We verify the analysis of ViS via simulations in \S\ref{exp} 
and 
examine the performance of ALG-ViS in real-world VR implementations in \S\ref{ev:performance_alg}. The parameters are listed in Table~\ref{tab:para} unless otherwise specified.

\subsection{Examining ViS}
\label{exp}
\subsubsection{Simulation settings}
\begin{table}
\caption{Parameters} 
\vspace{-0.2cm}
\centering
\label{tab:para}
\setlength{\tabcolsep}{1mm}{
\begin{tabular}{cc|cc}
\hline
Parameter & Value& Parameter & Value \\
\hline
VR frame resolution &1080$\times$1080& 
VR frame rate & 60 fps\\
AoV& $90^\circ$&
Mapping scheme &Equiangular\\
Depth encoding & Linear&
Depth map precision & 8 bits\\
Far clipping plane $d_{fp}$& 50 m&
Eye level height &1.6 m\\
Unity unit: meter & 1:1&
Pairs of viewport poses &5000\\
\hline
\end{tabular}}
\vspace{-0.1cm}
\end{table}

We verify the ViS analysis via  simulations using Unity Engine 2019.2.14f1~\cite{Unity20} with 3  VR games listed in Table~\ref{tab:vrgame}. The results are analyzed in MATLAB.  

To simulate the ViS, we randomly sample 5000 pairs of viewport poses from the collected pose trajectories for reference and novel cameras. We obtain
the \textit{pristine novel frame} rendered by Unity, and the \textit{generated  novel frame}  by view projection from the reference frame and its depth map, where  each pixel in the depth map represents the distance of the VR contents to the reference camera. Among the generated novel frame's pixels whose corresponding 3D points are at a distance $d$ from the reference camera, the pixels with RGB values with indistinguishable differences from the pristine novel frame constitute the set $\Phi\left(d\right)$; the other pixels form the set $\Phi^c\left(d\right)$.  The 
ViS given $d$ is calculated as \textit{the proportion  of matched pixels}  ${\widetilde\ViS}\left(d\right)=\frac{{\left| {\Phi \left( d \right)} \right|}}{{\left| {\Phi \left( d \right) \cup {\Phi ^c}\left( d \right)} \right|}}$.
We consider pixel values as indistinguishable when the difference of the pixel values is less than ${C_{th}}\in {\mathbb N}^ +$  in all RGB channels\footnote{$C_{th}$ is selected as the maximum value such that the structural similarity index measure (SSIM) between the combined and the pristine novel frame is larger than 0.95. The combined novel frame is obtained  by replacing the mismatched pixels 
$\mathop  \cup \limits_d {\Phi ^c}\left( d \right)$ in the generated novel frame with the corresponding pixels in the pristine novel frame.  }.
 \subsubsection{Numerical results}

Fig.~\ref{ev_interface} shows the ViS obtained in our simulations, ${\widetilde\ViS}\left(d\right)$, and the analytically derived ViS, $\ViS \left(d\right)$, for headset VR and phone-based VR, for different $d$ and $\Delta t$. The results for desktop VR are similar to the results for phone-based VR and are omitted.  
As expected, the ViS declines with $\Delta t$ (i.e., frames that are separated by a longer time interval are less similar), 
and increases with $d$
(i.e., contents that are farther from the camera change less across different frames). We observe differences between VR interfaces as well: the ViS for headset VR (Fig.~\ref{ev_interface}(a)) is smaller than the ViS for the other interface types (Fig.~\ref{ev_interface}(b)).  
This is explained by the difference in movement dynamics we observed in our dataset: in headset VR, the users change their viewport orientations and flight directions more rapidly than in the other VR interfaces. Smaller ViS for headset VR implies that \emph{fewer VR contents will be classified as background contents}. 
Finally, we note that the gap between the analytical results and the simulations is small, 4.1\% on average for $d<20$m and \emph{only 1.6\% on average for $d\geqslant 20$m}. The gap can be attributed to the omission of object occlusions from our analytical derivations. The achieved highly accurate analytical $\ViS\left(d\right)$ in the high ViS regime (e.g., $d\geqslant 20$m) is crucial for selecting the distance threshold $d_{tr}$ to ensure that the background has a high ViS.

The obtained ViS for the three VR games in desktop VR is shown in Fig.~\ref{ev_game}. The results for the other 2 interface types exhibit the same trends and are omitted. 
Although VK has a larger number of triangles and vertices, which manifests in higher visual scene complexity, the $\ViS\left(d\right)$ of VK is larger than the $\ViS\left(d\right)$ of Office and Lite. 
This is because users' pose trajectories in VK have relatively smaller  $b_{l,\theta}$ and $b_{l}$, and smaller $\mu$ and $\lambda$, corresponding to larger flight lengths and pause durations. In other words, users  tend to change both their viewport orientations and positions more slowly in VK. We hypothesize that higher-complexity games may potentially be more engaging, which encourages the users to explore them in a slower, more deliberate fashion. The observed differences between the ViS 
for different 
games suggest that it is important to take specific game's pose characteristics into account. 
\subsection{Performance of ALG-ViS}
\label{ev:performance_alg}

We implement ALG-ViS and a set of baselines in two VR rendering systems, an on-device (``local") one and one supported by edge computing. Both systems display the generated frames in an Oculus Quest 2 VR headset~\cite{Oculus20}. On-device rendering system is implemented with build 30.0; its target frame rate is set to the default 72 fps. 
The  edge-assisted system generates the VR frames in a Lenovo laptop, with Unity Engine 2019.1.14f1 and Google VR SDK \cite{VRSDK}, and sends them to the headset over IEEE 802.11ac WiFi. The laptop is equipped with an AMD Ryzen 7 4800H CPU and an NVIDIA GTX 1660 Ti GPU. The target frame rate of this system is set to the laptop's default 60 fps. 
To ensure reproducibility, for all algorithms our evaluation is based on replaying 30 min of 
headset pose trajectories we collected (see \S\ref{dataset}).
We examine the performance for all 3 games, and present the results for Lite, which are representative. We examine  the required bandwidth for the edge-supported system in Fig.~\ref{fig: bandwidth}, and the achieved SSIM, frame processing time, 
and CPU and GPU usage (monitored using the OVR Metrics Tool~\cite{OVRMetric}) for the local system in Fig.~\ref{ev_vis_alg}.

We compare ALG-ViS to different approaches with the same 
reference frame interval 
$R$: ALG-FX-S, ALG-FX-T, and ALG-ML. In ALG-ViS, 
$\ViS_{tr}$ is set to 0.945 to ensure high ViS values for background contents. We set $R$ as the minimum integer such that $d_{tr}<d_{fp}$ for every frame. In ALG-FX-S and ALG-FX-T, $d_{tr}$ is determined by the number of triangles $N_{tr}$ in the VR frame, similar to the near and far background splitting in~\cite{MengJiayi20}. Specifically, $d_{th}=\max\left\{qN_{tr},d_{fp}\right\}$, where  $q$ is fixed for each game. 
For a fair comparison, in ALG-FX-T, we set $q$ to make the average frame processing time of ALG-FX-T and ALG-ViS the same to compare the SSIM; in ALG-FX-S, we set $q$ to make the average SSIM  the same as \mbox{ALG-ViS} when comparing other metrics.
In ALG-ML, 
we adopt an online ridge regression model, which has been shown to achieve state-of-the-art accuracy 
in
360$^\circ$ video pose prediction~\cite{kNN,Flare}. 
Following \cite{Firefly},
we predict $x,y,z,\theta$, and $\phi$ separately. We set the history and prediction windows as in \cite{Firefly}. 
We split the VR contents by calculating the ViS of the reference frame and the predicted VR frame.

The CDF of the required bandwidth of the edge-assisted rendering system shown in Fig.~\ref{fig: bandwidth} demonstrates that \mbox{ALG-ViS} requires less bandwidth on average than ALG-FX-S (11.3\% difference; \mbox{ALG-FX-S} consistently transmits more pixels to maintain the same SSIM as ALG-ViS), and has significantly smaller bandwidth variance than ALG-ML (88.4\% reduction). 
Although ALG-ML can save bandwidth when it accurately predicts the viewport pose, the required bandwidth increases drastically when the prediction is erroneous.
Generating less bursty traffic, ALG-ViS prevents transmission resource over-provisioning and potential TCP incast problems in edge-assisted VR systems, which helps supporting these systems better when compared with ALG-ML.

Fig.~\ref{ev_vis_alg} shows the CDFs of the SSIM, frame processing time, and CPU and GPU usage for the local rendering system. It  shows that ALG-ViS improves frame quality and frame processing time while consuming fewer resources.  
The ALG-ViS ensures high average SSIM,
outperforming ALG-FX-T by 3.2\% and ALG-ML by 5.9\%. 
The ALG-ML exhibits lower frame quality. This is because ALG-ML splits the foreground and background contents based on the predicted pose, and prediction errors lead to severe performance degradation~\cite{YangLi2019}. The ALG-ViS decreases the frame processing time by 16.1\% and 33.4\% compared to ALG-FX-S and ALG-ML. The CPU usage of ALG-ML is 20.5\% higher than that of ALG-ViS  due to the extra computation 
required to tune the  regularization parameter and conduct the pose prediction. The GPU usage of ALG-FX-S is 33.7\% higher than that of ALG-ViS because ALG-FX-S classifies more VR contents as foreground contents on average. These results demonstrate that the developed ALG-ViS is lightweight yet effective. 


\section{Conclusion}
\label{sub:conclusion}
In this paper, we first propose a 
viewport pose model for VR systems based on the experimental measurements. We apply the pose model to adaptively select background contents that are reused across VR frames to reduce the communication and computation resource consumption, via quantifying the similarity of pixels across VR frames.
Numerical results verify the pose model and the inter-frame pixel similarity analysis. 
Oculus Quest 2-based implementations of our adaptive background content selection approach 
show that it improves the image quality by 5.6\%
and reduces the variance of the required bandwidth by 88.4\% compared to the method 
based on viewport pose prediction. 

\section*{Acknowledgments}

 This work is supported in part by NSF grants CSR-1903136, CNS-1908051, and CAREER-2046072, and by an IBM Faculty Award.

\bibliographystyle{IEEEtran}
\bibliography{IEEEabrv,Sig, MGbib}

\appendix
\label{proofdisplacement_case1}
\begin{figure*}[h]
\begin{equation}
\begin{aligned}
\label{eqn:large_eqn_displacement}
&{\mathbb{E}\left[ {{\psi^k}\mathds{1}\left( {{A_n}} \right)} \right]}
= \sum\limits_{h = 0}^{n - 2} {n-2 \choose h}{{\left( {1 - c} \right)}^h}{c^{n - 2 - h}}{v^{2k}} \left\{ \mathds{1}\left[ {n = 2} \right] \cdot {\int_0^{\Delta t} {{{\left( {\Delta t - \mathfrak{s}} \right)}^{2k}} }  + \mathds{1}\left[ {n\geqslant 3} \right] \cdot }\int_0^{\Delta t} {\int_0^{\Delta t - \mathfrak{s} }  \cdots  }  \right. \\&
\left. \int_0^{\Delta t - \mathfrak{s} - \sum\limits_{i = j}^{j + n - 3} {T{^\prime_i}} } {\mathbb{E}_{{{\bf{e}}_{j + 1}},{{\bf{e}}_{j + 2}}, \cdots ,{{\bf{e}}_{j + n - 1}}}}\left[ {{{\left( {\sum\limits_{i = j + 1}^{j + n - 2} {T{^\prime_i}{{\bf{e}}_i}}  + \left( {\Delta t -  \mathfrak{s}  - \sum\limits_{i = j + 1}^{j + n - 2} {T{^\prime_i}} } \right){{\bf{e}}_{j + n - 1}}} \right)}^{2k}}} \right] dT{^\prime_{j + 1}} \cdots dT{^\prime_{j + n - 2}}\right\} \\&
\int_0^\mathfrak{s}  {\int_0^{\mathfrak{s} - S{^\prime_j}}  \cdots  } \int_0^{\mathfrak{s} - S{^\prime_j} - \sum\limits_{i = 1}^{h - 1} {S_i^*} } {\int_{\mathfrak{s}  - S{^\prime_j} - \sum\limits_{i = 1}^h {S_i^*} }^\infty  {{\mu ^{n - 1}}} } {e^{ - \mu \left( {\Delta t - \mathfrak{s}} \right)}}{\lambda ^{h + 2}}{e^{ - \lambda \left( {\sum\limits_{i = 1}^h {S_i^*}  + S{^\prime_j} + S{^\prime_{j + n-1}}} \right)}}dS{^\prime_j}dS_1^* \cdots dS_h^*dS{^\prime_{j + n-1}} \cdot  d \mathfrak{s}
\\ =& \sum\limits_{h = 0}^{n - 2} {{n - 2 \choose h }{{\left( {1 - c} \right)}^h}{c^{n - 2 - h}}{v^{2k}}\int_0^{\Delta t} {{\mu ^{n - 1}}{e^{ - \mu \Delta t}}{\lambda ^{h + 1}}{n + k - 2 \choose n - 2}\frac{{{{\left( {\Delta t - \mathfrak{s}} \right)}^{2k + n - 2}}\left( {2k} \right)!}}{{\left( {2k + n - 2} \right)!}}{e^{ - \left( {\lambda  - \mu } \right)\mathfrak{s}}}\frac{{{\mathfrak{s}^{h + 1}}}}{{\left( {h + 1} \right)!}}d\mathfrak{s}} } \\
= &{n+k-2 \choose n-2} \sum\limits_{h = 0}^{n - 2}  {{\mu }{\lambda }{g_{n - 2,h,h + 2,k,2k + n + h + 1}}}
\end{aligned}
\end{equation}
\vspace{-0.4cm}
\hrulefill
\vspace{-0.1cm}
\end{figure*}

\subsection{Proof of Lemma~\ref{displacement_case1}}

When Case 1 holds, there are two possible ways to end the observation interval at time $t_s + \Delta t$. If $\exists j^\prime \in \mathbb{N}^{+}$, $\sum\limits_{n = 0}^{j^\prime} {{T_n} + \sum\limits_{n = 0}^{j^\prime - 1} {{S_n}} } <t_s+ \Delta t< \sum\limits_{n = 0}^{j^\prime} {{T_n} + \sum\limits_{n = 0}^{j^\prime} {{S_n}} } $, the observation end time $t_s + \Delta t$  lies in a pause interval.
Otherwise, $t_s + \Delta t$  lies in a flight interval. If it lies in a pause interval, the movement includes some number of complete flights without any fractional flights.  We do not observe any movement when $n=1$. For $n \geqslant 1$, $A_n$ is the event that $t_s + \Delta t$ lies in a pause interval and there are $n-1$ complete flights in $[t_s, t_s + \Delta t]$. This leads to the first summation in \eqref{Eqn: displacement1}. 

If $t_s + \Delta t$ lies in a flight interval, the movement includes some number of complete flights and a fraction of the last flight.  $B_n$ is the event that $t_s + \Delta t$ lies in a flight interval and there are $n-1$ complete flights in $[t_s, t_s + \Delta t]$.  This leads to the second summation in \eqref{Eqn: displacement1}.

It only remains to calculate ${\mathbb{E}\left[ {{\psi^k}\mathds{1}\left( {{A_n}} \right)} \right]}$ and ${\mathbb{E}\left[ {{\psi^k}\mathds{1}\left( {{B_n}} \right)} \right]}$  to conclude the proof. 
We start with ${\mathbb{E}\left[ {{\psi^k}\mathds{1}\left( {{A_n}} \right)} \right]}$.  Given $T_i ,{\bf e}_{i}$, the $k$-th moment of $\psi$ is given by ${v^{2k}}{\left\| {\sum\limits_{i = j+1}^{j+n - 1} {{T_i}{{\bf e}_i}} } \right\|^{2k}}$.
According to the flight direction model in \S\ref{modifiedrwp}, the direction of ${\bf e}_{i}$ is i.i.d. uniformly distributed on $\left[0,2\pi\right)$. Thus,  ${\mathbb{E}_{{{\bf e}_i},{{\bf e}_k}}}\left[{\bf e}_{i}{\bf e}_{k}\right]$ is equal to $1$ for $i=k$ and  $0$ for $i\ne k$. Thus, expanding ${\left\| {\sum\limits_{i = j+1}^{j+n - 1} {{T_i}{{\bf e}_i}} } \right\|^{2k}}$ and averaging over ${{{\bf e}_{j+1}},{{\bf e}_{j+2}}, \cdots ,{{\bf e}_{j+n - 1}}}$,  only the terms in which all $T_i{\bf e}_i$ ($j+1\leqslant i\leqslant j+n-1$) have even powers are non-zero. The number of these non-zero terms is ${n+k-2 \choose n-2}$. Further, let $h \left(h\leqslant n-1\right)$ denote the number of non-zero complete pause intervals (excluding the first and last fractional pause intervals). Denote the set of the non-zero complete pause intervals as $\{ S_i^*\} ,1 \leqslant i \leqslant h$. Let $\mathfrak{s} = \Delta t - \sum\limits_{i = j+1}^{j+n - 1} {{T_i}} $ be the sum of pause intervals in the observation interval.  ${\mathbb{E}\left[ {{\psi^k}\mathds{1}\left( {{A_n}} \right)} \right]}$ is given by \eqref{eqn:large_eqn_displacement}, shown at the top of next page. Using these observations, the expressions for ${\mathbb{E}\left[ {{\psi^k}\mathds{1}\left( {{B_n}} \right)} \right]}$ can be obtained similarly, which completes the proof.

\subsection{Proof of Lemma \ref{Lemma: Probability of Starting During Flight}} \label{Appendix: Probability of Starting Druing a Flight}

Let $\Gamma_n$ be the end time of the $n$-th flight, i.e., $\Gamma_n = \sum_{i=0}^{n-1} S_i +\sum_{i=0}^{n} T_i$. Then, for any $T > 0$, there exists $n \in \mathbb{N}$ such that $\Gamma_n \leqslant T < \Gamma_{n+1}$, and we can upper and lower bound $p_T$ according to 
\begin{equation*}
 \mathbb{E}\left[ {\frac{{\sum\limits_{i = 0}^n {{T_i}} }}{{\sum\limits_{i = 0}^n {{S_i}}  + \sum\limits_{i = 0}^n {{T_i}} }}} \right] \leqslant p_T \leqslant \mathbb{E}\left[ {\frac{{\sum\limits_{i = 0}^{n + 1} {{T_i}} }}{{\sum\limits_{i = 0}^{n - 1} {{S_i}}  + \sum\limits_{i = 0}^{n + 1} {{T_i}} }}} \right].
\end{equation*}

The lower bound includes an extra pause interval between $\Gamma_n$ and $\Gamma_{n+1}$ by ignoring any possible fractional flight duration. On the other hand, the upper bound includes one complete flight duration between $\Gamma_n$ and $\Gamma_{n+1}$ by ignoring $S_n$. 

When $T\to\infty$, using Lebesgue's dominated convergence theorem to replace the order of limit and expectation operators and the law of large numbers (as $T$ grows large, $n$ tends to infinity), it can be seen that both upper and lower bounds converge to $p = \frac{{\lambda /\left( {1 - c} \right)}}{{\lambda /\left( {1 - c} \right) + \mu }}$.  

\subsection{Proof of Theorem~\ref{prop:fovterm}}
\label{proof_fov}

Seen from Fig.~\ref{fig:fovterm}, the overlapping azimuth angle is
\begin{equation}
\label{eq:pf_fov}
\begin{aligned}
&{\mathbb{E}_{\Delta \phi }}\left[ {\frac{{\min \left( {{\phi _{\rm ref}},{\phi _{\rm nov}}} \right) + \frac{{{w _{fv}}}}{2} - \left( {\max \left( {{\phi _{\rm ref}},{\phi _{\rm nov}}} \right) - \frac{{{w _{fv}}}}{2}} \right)}}{{{w _{fv}}}}} \right]\\
\labrel={fovexplain1} & 1 - \frac{1}{{{w _{fv}}}}\int_{ - {w _{fv}}}^{{w _{fv}}} {\left| {\Delta \phi } \right|{p_{\Delta \phi }}\left( {\Delta \phi } \right)d} \left( {\Delta \phi } \right)\\
=&1 - \frac{2}{{{w _{fv}}}}\int_0^{{w _{fv}}} {\Delta \phi {p_{\Delta \phi }}\left( {\Delta \phi } \right)d} \left( {\Delta \phi } \right)=p^\phi_f.
 \end{aligned}
\end{equation}
where ${\Delta \phi }=\phi_{\rm ref}-\phi_{\rm nov}$, and \eqref{fovexplain1} is because 
$\min \left( {{\phi _{{\rm{ref}}}},{\phi _{{\rm{nov}}}}} \right) + \frac{{{\omega _{fv}}}}{2} - \left( {\max \left( {{\phi _{{\rm{ref}}}},{\phi _{{\rm{nov}}}}} \right) - \frac{{{\omega _{fv}}}}{2}} \right) = {\phi _{{\rm{ref}}}} + \frac{{{\omega _{fv}}}}{2} - \left( {{\phi _{{\rm{nov}}}} - \frac{{{\omega _{fv}}}}{2}} \right)$ if $\phi _{\rm ref}<\phi _{\rm nov}$ and $\min \left( {{\phi _{{\rm{ref}}}},{\phi _{{\rm{nov}}}}} \right) + \frac{{{\omega _{fv}}}}{2} - \left( {\max \left( {{\phi _{{\rm{ref}}}},{\phi _{{\rm{nov}}}}} \right) - \frac{{{\omega _{fv}}}}{2}} \right) = {\phi _{\rm nov}} + \frac{{{\omega _{fv}}}}{2} - \left( {{\phi _{\rm ref}} - \frac{{{\omega _{fv}}}}{2}} \right)$ if $\phi _{\rm ref}\geqslant\phi _{\rm nov}$. ${p_{\Delta \phi }}\left(\Delta \phi\right)$ is the PDF of $\Delta\phi$ obtained in Section~\ref{sub:orientation}:
\begin{align*}
&{p_{\Delta \phi }}\left( {\Delta \phi } \right) \\ = &
\left\{ {\begin{array}{*{20}{l}}
{\frac{1}{{2{b_l}}}\exp \left( { - \frac{{\left| {\Delta \phi } \right|}}{{{b_l}}}} \right),\;\Delta t < {\beta _1}}\\
{\left( {1 - {p_l}} \right)\frac{{\exp \left( { - \frac{{\left| {\Delta \phi } \right| - {\mu _{lo}}}}{{{b_{lo}}}}} \right)}}{{{b_{lo}}{{\left( {1 + \exp \left( { - \frac{{\left| {\Delta \phi } \right| - {\mu _{lo}}}}{{{b_{lo}}}}} \right)} \right)}^2}}}\frac{1}{{2{{\left( {1 + \exp \left( { - \frac{\pi }{{{b_{lo}}}}} \right)} \right)}^{ - 1}} - 1}}}\\
{ + {p_l}\frac{{\frac{1}{{2{b_l}}}\exp \left( { - \frac{{\left| {\Delta \phi } \right|}}{{{b_l}}}} \right)}}{{{{\left( {1 + \exp \left( { - \frac{\pi }{{{b_{lo}}}}} \right)} \right)}^{ - 1}} - 1}},\;{\beta _1}\leqslant \Delta t < {\beta _2}}\\
{\frac{w_{fv}}{{2\pi }},\;\Delta t > {\beta _2}.}
\end{array}} \right.\end{align*}
Substituting ${p_{\Delta \phi }}\left(\Delta \phi\right)$ into \eqref{eq:pf_fov}, we obtain the results for $p^\phi_f$ (the fraction of overlapping azimuth angles) in \eqref{eq:fov2}.

Similarly, we have
\begin{equation}
\label{eq:overlapping_polar}
\begin{aligned}
&{\mathbb{E}_{\Delta \theta }}\left[ {\frac{{\min \left( {{\theta _{\rm ref}},{\theta _{\rm nov}}} \right) + \frac{{{w _{fv}}}}{2} - \left( {\max \left( {{\theta _{\rm ref}},{\theta _{\rm nov}}} \right) - \frac{{{w _{fv}}}}{2}} \right)}}{{{w _{fv}}}}} \right]\\
 =&1 - \frac{2}{{{w _{fv}}}}\int_0^{{w _{fv}}} {\Delta \theta {p_{\Delta \phi }}\left( {\Delta \theta } \right)d} \left( {\Delta \theta } \right)\\
 =&\left( 1 - \frac{{{b_{l,\theta }} - \exp \left( { - \frac{{{w _{fv}}}}{{{b_{l,\theta }}}}} \right)\left( {{b_{l,\theta }} + {w _{fv}}} \right)}}{{{w _{fv}}}}\right)
\end{aligned}
\end{equation}
where ${p_{\Delta \theta }}\left( {\Delta \theta } \right) = \frac{1}{{2{b_{l,\theta }}}}\exp \left( { - \frac{{\left| {\Delta \phi } \right|}}{{{b_{l,\theta }}}}} \right)$ is the PDF of $\Delta\theta$ obtained in \S\ref{sub:orientation}. 

Combining fractions of overlapping azimuth angle and overlapping polar angle in \eqref{eq:fov2}  and \eqref{eq:overlapping_polar}, we get the final result of $\ViS_{fov}$.

\subsection{Proof of Theorem~\ref{distance_term}}
\label{p_d_proof}
$\vartheta$ is uniformly distributed on $\left( { - \frac{{{w_{fv}}}}{2},\frac{{{w_{fv}}}}{2}} \right)$. This is because the walking direction is the same as $\phi_{\rm{ref}}$ (see \S\ref{sec:correlation}), and the included angle of $\overrightarrow {{{\bf{X}}_{{\rm{ref}}}}{{\bf{X}}_{{\rm{3D}}}}} $ and positive direction of $X$-axis is uniformly distributed on $\left( {{\phi _\text{ref}} - \frac{{{w_{fv}}}}{2},{\phi _\text{ref}} + \frac{{{w_{fv}}}}{2}} \right)$.
 Taking the average over $\vartheta$ and substituting $m\left(1\right)$, \eqref{def:p_d_def} is rewritten as $\ViS_{dst}\left(d\right)=1 + \frac{m\left(1\right)}{{{d^2}}} - \frac{{4\sin \left( {\frac{{{w_{fv}}}}{2}} \right)}}{{{w_{fv}}d}}{\mathbb{E}_\psi }\left[ \sqrt{\psi}  \right]$, where
$\mathbb{E}_\psi \left[ \sqrt{\psi}  \right]$ is calculated as ${\mathbb{E}_\psi }\left[ \sqrt{\psi}  \right] = \frac{1}{{\sqrt \pi  }}\mathbb{E}_{\psi}\left[ {\int_0^\infty  {\psi{e^{ - \tau \psi}}{\tau ^{ - \frac{1}{2}}}} d\tau } \right]$.  We further approximate $\mathbb{E}_\psi \left[\sqrt \psi \right]$ by $\frac{1}{{\sqrt \pi  }}\mathbb{E}_{\psi}\left[ {\int_0^{\frac{1}{{{\varepsilon ^2}}}} {\psi {e^{ - \tau \psi }}{\tau ^{ - \frac{1}{2}}}} d\tau } \right]$ with the error  $\frac{1}{{\sqrt \pi  }}\mathbb{E}_{\psi}\left[ {\int_{\frac{1}{{{\varepsilon ^2}}}}^\infty  {\psi{e^{ - \tau \psi}}{\tau ^{ - \frac{1}{2}}}} d\tau } \right]$. $\mathbb{E}_{\psi}\left[ {\int_0^{\frac{1}{{{\varepsilon ^2}}}} {\psi {e^{ - \tau \psi }}{\tau ^{ - \frac{1}{2}}}} d\tau } \right]$ is given as
\begin{align*}
&\mathbb{E}_{\psi}\left[ {\int_0^{\frac{1}{{{\varepsilon ^2}}}} {\psi {e^{ - \tau \psi }}{\tau ^{ - \frac{1}{2}}}} d\tau } \right]\\
 =& \sum\limits_{i = 0}^\infty  {\left\{ {\mathbb{E}_{\psi}\left[ {\int_0^{\frac{1}{{{\varepsilon ^2}}}} {\psi\frac{{{{\left( { - \tau \psi} \right)}^i}}}{{i!}}{\tau ^{ - \frac{1}{2}}}} d\tau } \right]} \right\}} 
 = \sum\limits_{i = 0}^\infty  { g\left(i\right)} .
\end{align*}
The approximation error  $\mathbb{E}_{\psi}\left[ {\int_0^{\frac{1}{{{\varepsilon ^2}}}} {\psi {e^{ - \tau \psi }}{\tau ^{ - \frac{1}{2}}}} d\tau } \right]$ is smaller than $\frac{\varepsilon }{{\sqrt \pi  }} \mathcal{M}_{\psi}\left( { - \frac{1}{{{\varepsilon ^2}}}} \right)$, i.e., we have
\begin{equation*}
\begin{aligned}
0 &< \mathbb{E}_{\psi}\left[ {\int_{\frac{1}{{{\varepsilon ^2}}}}^\infty  {\psi{e^{ - \tau \psi}}{\tau ^{ - \frac{1}{2}}}} d\tau } \right] < \varepsilon\mathbb{E}_{\psi} \left[\int_{\frac{1}{{{\varepsilon ^2}}}}^\infty  {\psi {e^{ - \tau \psi }}d\tau }  \right]\\& = \varepsilon {\mathbb{E}_\psi }\left[ {{e^{ - \frac{\psi }{{{\varepsilon ^2}}}}}} \right] = \varepsilon {\mathcal{M}_\psi }\left( { - \frac{1}{{{\varepsilon ^2}}}} \right)<\varepsilon
\end{aligned}
\end{equation*}
which concludes the proof.

\end{document}